\newcommand{\CH}{\operatorname{CH}}
\DeclareMathOperator{\rcr}{\overline{cr}}
\newcommand{\LF}{\left\lfloor}
\newcommand{\RF}{\right\rfloor}
\newcommand{\LC}{\left\lceil}
\newcommand{\RC}{\right\rceil}
\newtheorem{theorem}{Theorem}
\newtheorem{lemma}[theorem]{Lemma}
\newtheorem{cor}[theorem]{Corollary}
\newtheorem{obs}[theorem]{Observation}
\newtheorem{prop}[theorem]{Proposition}
\newtheorem{conj}[theorem]{Conjecture}
\title{On $k$-Gons and $k$-Holes in Point Sets\thanks{%
Research of Oswin Aichholzer and \mbox{Birgit} Vogtenhuber supported by the ESF EUROCORES programme EuroGIGA -- CRP `ComPoSe', Austrian Science Fund (FWF): I648-N18.
Research of Ruy Fabila-Monroy partially supported by CONACyT (Mexico), grant 153984.
Research of Thomas Hackl supported by the Austrian Science Fund (FWF): P23629-N18 `Combinatorial Problems on Geometric Graphs'.
Research of Clemens Huemer partially supported by projects MTM2012-30951 and Gen. Cat. DGR 2009SGR1040. 
Research of Marco Antonio Heredia, Hern{\'a}n Gonz{\'a}lez-Aguilar, and Jorge Urrutia partially supported by CONACyT (Mexico), grant CB-2007/80268.
Work by Pavel Valtr supported by project 1M0545 of the Ministry of Education of the Czech Republic. 
}}
\author{Oswin Aichholzer\thanks{Institute for Software Technology,
        University of Technology, Graz, Austria,
        {\tt [oaich|thackl|bvogt]@ist.tugraz.at}}
		\and
		Ruy Fabila-Monroy\thanks{Departamento de Matem{\'a}ticas,
        Cinvestav, D.F. M\'exico, M{\'e}xico,
        {\tt ruyfabila@math.cinvestav.edu.mx}}
        \and 
        Hern{\'a}n Gonz{\'a}lez-Aguilar \thanks{Facultad de Ciencias, 
		Universidad Aut\'onoma de San Luis Potos\'i, San Luis Potos\'i, M\'exico,
		{\tt hernan@fc.uaslp.mx}}
        \and
        Thomas Hackl$^\dag$
		\and  
        Marco A. Heredia 
		\thanks{ 
		Departamento de Sistemas, 
		Universidad Aut\'onoma Metropolitana - Azcapotzalco, 
		D.F. M\'exico, M\'exico, 
		{\tt  marco@ciencias.unam.mx}} 
		\and  
        Clemens Huemer\thanks{Departament de Matem{\`a}tica Aplicada IV, 
          Universitat Polit{\`e}cnica de Catalunya, Barcelona, Spain, 
          {\tt clemens.huemer@upc.edu}}
		\and 
        Jorge Urrutia\thanks{Instituto de Matem{\'a}ticas,
        Universidad Nacional Aut{\'o}noma de M{\'e}xico, D.F. M\'exico, M\'exico,
		{\tt urrutia@matem.unam.mx}}
		\and
		Pavel Valtr\thanks{Department of Applied Mathematics and Institute for Computer Science (ITI), 
		Charles University, Prague, Czech Republic} 
		\and
        Birgit Vogtenhuber$^\dag$
}
\begin{document}
\thispagestyle{empty}
\maketitle 
\begin{abstract} 
  We consider a variation of 
  the classical Erd\H{o}s-Szekeres problems 
  on the \mbox{existence} and number of convex \mbox{$k$-gons} and \mbox{$k$-holes}
  (empty \mbox{$k$-gons}) in a set of $n$ points in the plane.
  Allowing the \mbox{$k$-gons} to be non-convex,
  we show bounds and structural results on 
  maximizing and minimizing their numbers.
  Most noteworthy, for any $k$ and sufficiently large $n$, we give a 
  quadratic lower bound for the number of 
  \mbox{$k$-holes}, and show that this number is maximized by sets in convex 
  position. 
\end{abstract}

\section{Introduction}
\label{sec:intro}

Let $S$ be a set of $n$ points in general position in the plane 
(i.e, no three points of $S$ are collinear). 
A \mbox{$k$-gon} is a simple polygon spanned by $k$~points of $S$.  
A \mbox{$k$-hole} is an empty \mbox{$k$-gon}; that is, a 
\mbox{$k$-gon} that contains no points of $S$ in its interior. 

Around 1933 Esther Klein raised the following question, which was
(partially) answered in the classical paper by Erd\H{o}s and
Szekeres~\cite{ES} in 1935: ``Is it true that for any $k$ there is a
smallest integer $g(k)$ such that any set of $g(k)$ points contains at
least one convex \mbox{$k$-gon}?'' 
As observed by Klein, $g(4)=5$, and
Kalbfleisch et al.\!~\cite{KKS}  proved that 
$g(5)=9$. The case \mbox{$k=6$}
was only solved as recently as 
2006 by Szekeres and Peters~\cite{SP}.
They showed that $g(6)=17$ by an exhaustive computer search.
The well known Erd\H{o}s--Szekeres Theorem~\cite{ES} 
states that $g(k)$ is finite for any $k$. 
The current best bounds are $2^{k-2}+1 \leq g(k) \leq \binom{2k-5}{k-2} +1$ for $k\geq 5$,
where the lower bound
goes back to Erd\H{o}s and Szekeres~\cite{ES2} and is conjectured to
be tight.
There have been many improvements on the upper bound, where
the currently best bound has been obtained in 2005 by T\'oth and 
Valtr~\cite{TV2}; see e.g.~\cite{a-ecgrr-09} for more details.

Erd\H{o}s and Guy~\cite{EG} posed the following generalization: 
``What is the least number of convex $k$-gons determined by any set $S$ of
$n$ points in the plane?'' The trivial solution for the case $k=3$ is
$\binom{n}{3}$.  
For convex \mbox{4-gons} this question is highly non-trivial, as it is
related to the search for the rectilinear crossing number
$\rcr(n)$, the minimum number of crossings in a straight-line drawing of the 
complete graph with $n$ vertices; see the next section for details.

In 1978 Erd\H{o}s~\cite{Er78} raised the following question for convex
\mbox{$k$-holes}: ``What is the smallest integer $h(k)$ such that any set of
$h(k)$ points in the plane contains at least one convex \mbox{$k$-hole}?''
As observed by Esther Klein, every set of 5 points determines
a convex \mbox{4-hole},
and Harborth~\cite{Ha78} showed that 10 points always contain a convex \mbox{5-hole}.
Surprisingly, in 1983 Horton showed
that there exist arbitrarily large sets of points containing no
convex \mbox{7-hole}~\cite{Ho83}. 
It took almost a quarter of a century after Horton's construction to answer the existence 
question for $k=6$.
In 2007/08 Nicol\'{a}s~\cite{nic} and independently
Gerken~\cite{gerk} proved that every sufficiently large point set
contains a convex \mbox{6-hole}; see also~\cite{V08}. 

A natural generalization of the existence question for $k$-holes is 
this:
``What is the least number $h_k(n)$ of
\mbox{convex} \mbox{$k$-holes} determined by any set of $n$ points in
the plane?''  
Horton's construction implies 
$h_k(n)=0$ for $k \geq 7$. Table~\ref{tab:holes2} shows the
current best lower and upper bounds for $k=3, \ldots, 6$.

\begin{table}[htb]
\begin{center}
{
\setlength{\tabcolsep}{3pt}
\renewcommand{\arraystretch}{1.40}
\begin{tabular}{rcl}
\hline
\vspace{-10pt}\\
$n^2 - \frac{32}{7}n + \frac{22}{7}$  	& $\leq h_3(n) \leq$ & $1.6196n^2+o(n^2)$ \\
$\frac{n^2}{2} - \frac{9}{4}n - o(n)$   & $\leq h_4(n) \leq$ & $1.9397n^2+o(n^2)$ \\
$\frac{3n}{4} - o(n)$  	                & $\leq h_5(n) \leq$ & $1.0207n^2+o(n^2)$ \\
$\frac{n}{229}- 4$					 	& $\leq h_6(n) \leq$ & $0.2006n^2+o(n^2)$ \\
\vspace{-10pt}\\
\hline
\end{tabular}\vspace{-2ex}
}
\end{center}
\caption{Bounds on the numbers $h_k(n)$ of convex \mbox{$k$-holes}~\cite{afhhpv-lbnsc-14,BV2004,v-ephpp-12}.}
\label{tab:holes2}
\end{table}

All upper bounds in the table are due to B{\'a}r{\'a}ny and Valtr~\cite{BV2004}. 
They are obtained by improving constructions that had been developed by Dumitrescu~\cite{Du} and previously improved by Valtr~\cite{Va95}.

Concerning the lower bounds for $k \leq 5$, Dehnhardt~\cite{De87} showed in his PhD thesis that for $n\geq 13$,
$h_3(n) \geq n^2 - 5n + 10$,
$h_4(n) \geq \binom{n-3}{2} + 6$, and 
$h_5(n) \geq 3\LF\frac{n}{12}\RF$. 
As this PhD thesis was published in German and is not easy to access, later on 
several weaker bounds have been published. 
Only very recently these results have been subsequently 
improved~\cite{g-nnet-11, g-nnet-12, 5HOLES_EGC, ahv-5g5h-12, v-ephpp-12, afhhpv-lbnsc-14}, 
where the currently best bounds can be found in~\cite{afhhpv-lbnsc-14}, using a remarkable result from~\cite{g-nnet-12}.
A result of independent interest is by Pinchasi~et~al.\!~\cite{PiRaSh06}, 
who showed $h_4(n) \geq h_3(n) - \frac{n^2}{2} - O(n)$ and $h_5(n) \geq h_3(n) - n^2 - O(n)$. 
By this, 
improving the $n^2$-factor in the lower bound of $h_3(n)$ 
implies better lower bounds also for $h_4(n)$ and $h_5(n)$.
Concerning lower bounds on the number of 6-holes, a proof of $h_6(n) \geq  \lfloor \frac{n-1}{858} \rfloor-2$ 
is contained in the proceedings version~\cite{KHOLES_CCCG} of the paper at hand.
This proof is based on $h_6(1717)\geq 1$ by Gerken~\cite{gerk}.
Valtr~\cite{v-ephpp-12} presented an improved bound of $h_6(n) \geq  \frac{n}{229}-4$, 
combining a different proof technique with Koshelev's result of $h_6(463)\geq 1$~\cite{ko-espeh-2009}.
As combining this result by Koshelev with the proof of~\cite{KHOLES_CCCG} only gives 
$h_6(n) \geq  \frac{n}{231}-O(1)$, the according proof is omitted here. 
%

In this paper we generalize the above questions on the numbers of \mbox{$k$-gons} and \mbox{$k$-holes} 
by allowing the gons/holes  to be non-convex. 
Thus, whenever we refer to a (general) \mbox{$k$-gon} or \mbox{$k$-hole}, 
unless it is specifically stated to be convex or non-convex, it could be either. 
Similar results for \mbox{4-holes} and \mbox{5-holes} can be found in~\cite{4HOLES_CGTA} 
and~\cite{ahv-5g5h-12}, respectively.  
The PhD thesis~\cite{mythesis} summarizes most results obtained for $k \geq 4$.
See also~\cite{a-ecgrr-09} for a survey on the history of questions and 
results about \mbox{$k$-gons} and \mbox{$k$-holes}.
We remark that in some related literature, \mbox{$k$-holes} are assumed to be convex.

A set of $k$ points in convex position spans precisely one
convex \mbox{$k$-hole}. In contrast, a point set might admit exponentially
many different polygonizations (spanning cycles)~\cite{GNT00}. 
Thus, the number of $k$-gons and \mbox{$k$-holes} 
can be larger than $\binom{n}{k}$, which makes the questions considered in this paper
more challenging (and interesting) than they might appear at first glance.

Tables~\ref{tab:overview_gons} and~\ref{tab:overview_holes} 
summarize the best current bounds
on the numbers of \mbox{$k$-gons} and \mbox{$k$-holes}, including the
results of this paper. 
The entries in the tables 
list lower and upper bounds, 
also in explicit form if available,
thus indicating for which values there are still gaps to close.
Among other results, we generalize properties concerning \mbox{4-holes}~\cite{4HOLES_CGTA} 
and \mbox{5-holes}~\cite{ahv-5g5h-12} to 
$k \geq 6$. 
In Section~\ref{sec:gons} we give asymptotic bounds on the number of 
non-convex and general \mbox{$k$-gons}.
In Section~\ref{sec:max_gen_holes} we consider (general) \mbox{$k$-holes}. We show that 
for sufficiently small $k$ with respect to $n$ their number is maximized by sets in convex position, 
which is not the case for large $k$.
Section~\ref{sec:ub_nc_holes} provides a tight bound for the maximum 
number of non-convex \mbox{$k$-holes}, 
and Section~\ref{sec:min_gen_holes} contains bounds for the minimum number of
general \mbox{$k$-holes}. We conclude with open problems in 
Section~\ref{sec:conclusion}.

\begin{table*}[htb]
\begin{center}
{
\setlength{\tabcolsep}{4pt}
\setlength{\arraycolsep}{1pt}
\renewcommand{\arraystretch}{1.40}
\begin{tabular}{l||l|l|l|l}
      	& \multicolumn{1}{c|}{convex} & \multicolumn{1}{c|}{non-convex} & \multicolumn{2}{c}{\hspace{0.0cm}general} \\
		& \multicolumn{1}{c|}{min} & \multicolumn{1}{c|}{max} & \multicolumn{1}{c|}{min} & \multicolumn{1}{c}{max} \\ \hline\hline
$k\!=\!4$ 	& $\begin{array}{ll}  
			\rcr(n) \\
			\Theta (n^4)
		   \end{array}$ 
	  	& $\begin{array}{l}  
			3{n \choose 4}\!-\! 3\rcr(n) \\ 
			\Theta (n^4)~\cite{ahv-5g5h-12}
		   \end{array}$  
		& $\begin{array}{l}  
			{n \choose 4}  \\
			\Theta (n^4)~\cite{ahv-5g5h-12}
		   \end{array}$ 
		& $\begin{array}{l}  
			3{n \choose 4}\!-\! 2\rcr(n) \\
			\Theta (n^4)~\cite{ahv-5g5h-12}
		   \end{array}$								\\\hline
$k\!=\!5$ 	& $\begin{array}{l}   
			 \Theta (n^5)~\cite{BMP05} 
		   \end{array}$ 
	  	& $\begin{array}{l}   
			 10{n \choose 5}\! -\! 2(n\!-\!4)\rcr(n) \\ 
			 \Theta (n^5)~\cite{ahv-5g5h-12}
		   \end{array}$  
		& $\begin{array}{l}   
			 {n \choose 5}  \\
			 \Theta (n^5)~\cite{ahv-5g5h-12}
		   \end{array}$ 
		& $\begin{array}{l}   
			 \Theta (n^5)~\mbox{[Sec.~\ref{sec:gons}]}
		   \end{array}$								\\\hline
$k\!\geq\!6$ & $\begin{array}{l} 
			 \Theta (n^k)~\cite{BMP05} 
		   \end{array}$ 
	  	& $\begin{array}{l}  
			 \Theta (n^k)~\mbox{[Sec.~\ref{sec:gons}]}
		   \end{array}$  
		& $\begin{array}{l}  
			 {n \choose k}  \\
			 \Theta (n^k)~\mbox{[Sec.~\ref{sec:gons}]}
		   \end{array}$ 
		& $\begin{array}{l}  
			 \Theta (n^k)~\mbox{[Sec.~\ref{sec:gons}]}
		   \end{array}$
\end{tabular}\vspace{-2ex}
}
\end{center}
\caption{Bounds on the numbers of convex, non-convex, and general 
	\mbox{$k$-gons} for $n$ points and constant $k$.\vspace{2ex}
}
\label{tab:overview_gons}
\end{table*}

\begin{table*}[htb]
\begin{center}
{ \small
\setlength{\tabcolsep}{4pt}
\setlength{\arraycolsep}{1pt}
\renewcommand{\arraystretch}{1.40}
\begin{tabular}{l||l|l|l|l}
		& \multicolumn{1}{c|}{convex} & \multicolumn{1}{c|}{non-convex} & \multicolumn{2}{c}{\hspace{2.9cm}general} \\
		& \multicolumn{1}{c|}{min} & \multicolumn{1}{c|}{max} & \multicolumn{1}{c|}{min} & \multicolumn{1}{c}{max} \\ \hline\hline
$k\!=\!4$ 	
		& $\begin{array}{ll}  
			\geq & \frac{n^2}{2} - \frac{9}{4}n - o(n)  \\
			\leq & 1.9397n^2\!+\!o(n^2) \\
			\multicolumn{2}{l}{\Theta (n^2)~\cite{afhhpv-lbnsc-14, BV2004}} 
		   \end{array}$
		& $\begin{array}{ll}  
			\leq &  \frac{n^3}{2}\!-\!O(n^2)  \\
			\geq &  \frac{n^3}{2}\!-\!O(n^2 \log n)  \\
			\multicolumn{2}{l}{\Theta (n^3)~\cite{4HOLES_CGTA}}
		   \end{array}$
		& $\begin{array}{ll}  
			\geq & \frac{5}{2}n^2\!-\!O(n)   \\
			\leq & O(n^{\frac{5}{2}}\log n) 	 \\
			\multicolumn{2}{l}{\Omega (n^2)~\cite{4HOLES_CGTA},} \\ 
			\multicolumn{2}{l}{O(n^{\frac{5}{2}}\log n)~\mbox{[Sec.~\ref{sec:min_gen_holes}]}}
		   \end{array}$
		& $\begin{array}{l}  
			 {n \choose 4}  \\
			 \Theta (n^4)~\cite{4HOLES_CGTA} 
		   \end{array}$								\\\hline
$k\!=\!5$ 	& $\begin{array}{ll}   
			\geq & \frac{3n}{4} - o(n) \\ 
			\leq & 1.0207n^2\!+\!o(n^2) \\
			\multicolumn{2}{l}{\Omega (n)~\cite{afhhpv-lbnsc-14}, \ O(n^2)~\cite{BV2004}}
		   \end{array}$
		& $\begin{array}{ll}   
			\leq &  n! / (n\!-\!4)!  \\
			\multicolumn{2}{l}{\Theta (n^4)}~\mbox{[Sec.~\ref{sec:ub_nc_holes}]} 
		   \end{array}$
		& $\begin{array}{ll}  
			\geq & 17n^2\!-\!O(n)   \\
			\leq & O(n^3 (\log n)^2) \\
			\multicolumn{2}{l}{\Omega (n^2)~\cite{ahv-5g5h-12}, \ O(n^3(\log n)^2)~\mbox{[Sec.~\ref{sec:min_gen_holes}]}}
		   \end{array}$
		& $\begin{array}{l}  
			 {n \choose 5}  \\
			 \Theta (n^5)~\cite{ahv-5g5h-12} 
		   \end{array}$								\\\hline
$k\!\geq\!6$ &  $\begin{array}{ll} 
			k\!=\!6\!: & \geq \frac{n}{229}- 4\\
				    & \Omega(n)~\cite{v-ephpp-12} \\ 
				    & O(n^2)~\cite{BV2004} \\  
			k\!\geq\!7\!: & \emptyset~\cite{Ho83} 
		   \end{array}$
		& $\begin{array}{ll} 
			\leq &  n! / (n\!-\!k\!+\!1)!  \\
			 \multicolumn{2}{l}{\Theta (n^{k-1})~\mbox{[Sec.~\ref{sec:ub_nc_holes}]}} 
		   \end{array}$
		& $\begin{array}{ll}  
			\geq & n^2\!-\!O(n)   \\
			\leq & O(n^{\frac{k+1}{2}}(\log n)^{k-3}) \\
			\multicolumn{2}{l}{\Omega (n^2), \ O(n^{\frac{k+1}{2}}(\log n)^{k-3})~\mbox{[Sec.~\ref{sec:min_gen_holes}]}}	
		   \end{array}$
		& $\begin{array}{l}  
			 {n \choose k}  \\
			 \Theta (n^k)~\mbox{[Sec.~\ref{sec:max_gen_holes}]}
		   \end{array}$	
\end{tabular}
}
\end{center}
\caption{Bounds on the numbers of convex, non-convex and general 
	\mbox{$k$-holes} for $n$ points and constant $k$.}
\label{tab:overview_holes}
\end{table*}

\section{General $k$-gons}
\subsection{$k$-gons and the rectilinear crossing number}
\label{sec:gons}

For small values of $k$, the number of $k$-gons in a point set $S$ of $n$ points 
can be related to the rectilinear crossing number $\rcr(S)$ of $S$.
This is the number of proper intersections 
(i.e., intersections in the interior of edges)
in the (drawing of the) complete straight-line graph on~$S$. By 
$\rcr(n)$ we denote the minimum possible rectilinear crossing number over all
point sets of cardinality $n$. Determining $\rcr(n)$ is a well-known problem in discrete geometry; 
see~\cite{BMP05,EG} as general references and~\cite{web} for bounds on small sets. 
Asymptotically we have $\rcr(n) = c_4 \binom{n}{4} =\Theta(n^4)$, where $c_4$ is a constant
in the range $0.379972 \leq c_4 \leq 0.380473$. The currently best lower
bound on $c_4$ is by {\'A}brego et al.\!~\cite{Abrego2008273, Abrego2012LB2}. 
The upper bound stems from a recent work of Fabila-Monroy and L{\'o}pez~\cite{FaLo14}.

It is easy to see that the number of convex \mbox{4-gons} is equal to
$\rcr(S)$ and is thus minimized by sets realizing $\rcr(n)$.
Since four points in non-convex position span three non-convex
\mbox{4-gons}, we have at most $3 {n \choose 4}\!-\!3 \rcr(n)\approx 1.86 {n \choose 4}$
non-convex and at most $3 {n \choose 4}\!-\!2 \rcr(n)\approx 2.24 {n \choose 4}$ general
\mbox{4-gons}. These bounds are tight for point sets minimizing 
the rectilinear crossing number.

A similar relation has been obtained for the number of
non-convex \mbox{5-gons} in~\cite{ahv-5g5h-12}: Any set of $n$ points 
has at most 
$10{n \choose 5}-2(n-4)\rcr(n) \approx 6.2{n \choose 5}$ non-convex
5-gons. Again, this bound is achieved by sets minimizing the
rectilinear crossing number. Note that the maximum numbers of non-convex 4- and 5-gons exceed the maximum numbers of their convex counterparts.
For the number of general 5-gons, no such direct relation to $\rcr(n)$ is possible, 
as already for $n=6$ there exist point sets with the same number of crossings but different 
numbers of 5-gons~\cite{ahv-5g5h-12}.
Similarly, for $k \geq 6$, none of the three types of $k$-gons (convex, non-convex, and 
general) in a point set~$S$ can be expressed as a function of $\rcr(S)$.
Still, we can use the rectilinear crossing number to obtain bounds on these numbers. 
Let $g_k^t(S)$ be the number of $k$-gons of \emph{type $t$} 
(\emph{conv}ex, \emph{non-conv}ex, or \emph{gen}eral) in a point set~$S$.

\begin{prop}
\label{prop:kh_gons_rc}
Let $k\geq 4$, and let $c_1$, $c_2$, and $x$ be arbitrary fixed constants such that 
Inequality~(\ref{eqn:kh_gons_start}) holds for all sets $S'$ with cardinality $|S'|=k$.
\begin{equation}
c_1 \leq g_k^t(S') + x \cdot \rcr(S') \leq c_2 \label{eqn:kh_gons_start}
\end{equation}
Then for every point set $S$ with $|S| = n \geq  k$, 
the following bounds hold for the number $g_k^t(S)$ of $k$-gons of type $t$ in $S$.
\begin{eqnarray}
g_k^t(S) & \geq & c_1 \cdot \binom{n}{k} - x \cdot \binom{n-4}{k-4} \cdot \rcr(S) \label{eqn:kh_gons_goal1}\\
g_k^t(S) & \leq & c_2 \cdot \binom{n}{k} - x \cdot \binom{n-4}{k-4} \cdot \rcr(S) \label{eqn:kh_gons_goal2} 
\end{eqnarray}
\end{prop}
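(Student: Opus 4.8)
The plan is to prove this by a double-counting argument over all $k$-subsets of $S$, summing the local inequality~(\ref{eqn:kh_gons_start}) applied to each. First I would observe that both $\binom{n}{k}$ and $\rcr(S)$ decompose nicely when we sum over $k$-subsets $S'\subseteq S$: trivially $\sum_{S'} 1 = \binom{n}{k}$, and each crossing of the complete straight-line graph on $S$ is determined by a $4$-subset, so it is counted in exactly $\binom{n-4}{k-4}$ of the $k$-subsets, giving $\sum_{S'} \rcr(S') = \binom{n-4}{k-4}\cdot\rcr(S)$. The remaining ingredient is to understand $\sum_{S'} g_k^t(S')$, i.e., the total number of (type-$t$) $k$-gons over all $k$-point subsets.

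The key structural point is that a $k$-gon of $S$ — a simple polygon on some $k$ points of $S$ — is the \emph{same combinatorial object} as a $k$-gon of the induced subset $S'$ on exactly those $k$ points: convexity/non-convexity of the polygon is an intrinsic property of the $k$ chosen points and the cyclic order, independent of the rest of $S$. Hence each $k$-gon of type $t$ in $S$ is counted exactly once in the sum $\sum_{S'} g_k^t(S')$, namely for the $S'$ equal to its vertex set, so $\sum_{S'} g_k^t(S') = g_k^t(S)$. (This is where we really use that we are counting $k$-gons and not $k$-holes: emptiness is \emph{not} an intrinsic property of the $k$ vertices, so the argument would break for holes — but the statement is only about $g_k^t$.)

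Now summing~(\ref{eqn:kh_gons_start}) over all $\binom{n}{k}$ subsets $S'$ of size $k$ yields
\begin{equation*}
c_1\binom{n}{k} \;\leq\; g_k^t(S) + x\cdot\binom{n-4}{k-4}\cdot\rcr(S) \;\leq\; c_2\binom{n}{k},
\end{equation*}
and rearranging the left inequality gives~(\ref{eqn:kh_gons_goal1}) while rearranging the right one gives~(\ref{eqn:kh_gons_goal2}). Note this works regardless of the sign of $x$, since we add the same quantity to all three parts of the inequality.

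The only subtlety, and the step I would be most careful about, is the claim $\sum_{S'} g_k^t(S') = g_k^t(S)$ — specifically verifying that a simple polygon on $k$ points of $S$ remains simple (and keeps the same convex/non-convex classification) when viewed in isolation, and conversely that no new $k$-gons appear or disappear. This is essentially immediate from the definition, since a $k$-gon is spanned by exactly $k$ points and its simplicity and convex position depend only on those points; but it should be stated explicitly because it is exactly the place where the generalization to non-convex gons matters and where one must not confuse $k$-gons with $k$-holes. Everything else is the routine subset-counting identity for crossings. I would also remark that the bounds in~(\ref{eqn:kh_gons_goal1})--(\ref{eqn:kh_gons_goal2}) are tight whenever~(\ref{eqn:kh_gons_start}) is tight for the relevant $k$-point configurations and $S$ can be chosen to consist only of such configurations, which is how the $4$- and $5$-gon results quoted above are recovered.
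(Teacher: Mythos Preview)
Your proof is correct and follows essentially the same double-counting argument as the paper: sum the local inequality over all $k$-subsets, use $\sum_{S'} g_k^t(S') = g_k^t(S)$ and $\sum_{S'} \rcr(S') = \binom{n-4}{k-4}\,\rcr(S)$, and rearrange. If anything, you are more explicit than the paper in justifying why the $k$-gon count is intrinsic to the $k$-subset (and in flagging that this would fail for $k$-holes), which is a nice touch.
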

\begin{proof}
Given some point set $S$ with $n$ points, consider all its $\binom{n}{k}$ subsets of size $k$ 
$\{S_i \subseteq S: |S_i|=k\}$. Then we have the following equations.
\begin{eqnarray}
	\sum_i \rcr(S_i) & = & \binom{n-4}{k-4} \cdot \rcr(S) \label{eqn:kh_gons_eq1}\\
	\sum_i g_k^t(S_i) & = & g_k^t(S) \label{eqn:kh_gons_eq2}
\end{eqnarray}
Using the first inequality in~(\ref{eqn:kh_gons_start}), Equation~(\ref{eqn:kh_gons_eq2}) can be transformed to the lower bound
\begin{equation*}
g_k^t(S)  \ = \  \sum_i g_k^t(S_i) 
		  \ \geq \  \sum_i \left(c_1 - x \cdot \rcr(S_i)\right) 
		  \ = \  c_1\cdot \binom{n}{k} - x \cdot \sum_i \rcr(S_i) 
\end{equation*}
which, by applying~(\ref{eqn:kh_gons_eq1}),
gives the desired bound~(\ref{eqn:kh_gons_goal1}).
Analogously, we obtain~(\ref{eqn:kh_gons_goal2}) if we combine the second inequality 
in~(\ref{eqn:kh_gons_start}) with Equations~(\ref{eqn:kh_gons_eq1}) and~(\ref{eqn:kh_gons_eq2}).
\end{proof}

If $x$ is negative in Inequality~(\ref{eqn:kh_gons_goal1}), 
then the rectilinear crossing number $\rcr(S)$ of $S$ adds to the lower bound. Thus we 
can replace it by the minimum over all point sets of size $n$, $\rcr(n)$, and by this obtain a lower bound that is 
independent of $S$. 

\begin{cor}
\label{cor:kh_gons_cr_lb}
Assume that for some constants $x \leq 0$ and $c_1$ arbitrary, the inequality $c_1  \leq  g_k^t(S') + x \cdot \rcr(S')$ 
is satisfied for all point sets $S'$ with cardinality $|S'|=k$.
Then for every point set $S$ with $|S| = n \geq  k$ the following lower bound holds for the 
number $g_k^t(S)$ of $k$-gons of type $t$ in $S$.
\begin{equation}
g_k^t(S)  \geq  c_1 \cdot \binom{n}{k} - x \cdot  \binom{n-4}{k-4} \cdot c_4  \binom{n}{4} = 
 \left(c_1 - x \cdot c_4 \cdot \binom{k}{4}\right) \binom{n}{k} \label{eqn:kh_gons_general1}
\end{equation}
\end{cor}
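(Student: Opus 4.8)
The plan is to read this off directly from Proposition~\ref{prop:kh_gons_rc}. The hypothesis assumed here, namely $c_1 \le g_k^t(S') + x\cdot\rcr(S')$ for every $S'$ with $|S'|=k$, is exactly the left half of Inequality~(\ref{eqn:kh_gons_start}), and the derivation of Inequality~(\ref{eqn:kh_gons_goal1}) in the proof of Proposition~\ref{prop:kh_gons_rc} uses only that left half (together with the averaging identities~(\ref{eqn:kh_gons_eq1}) and~(\ref{eqn:kh_gons_eq2})). Hence~(\ref{eqn:kh_gons_goal1}) applies verbatim, giving for every $S$ with $|S|=n\ge k$
\begin{equation*}
g_k^t(S) \ \ge\ c_1\cdot\binom{n}{k} - x\cdot\binom{n-4}{k-4}\cdot\rcr(S).
\end{equation*}

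The next step is to remove the dependence on the particular set $S$. Since $x\le 0$, the term $-x\cdot\binom{n-4}{k-4}\cdot\rcr(S)$ is nonnegative and nondecreasing in $\rcr(S)$, so over all $n$-point sets it is smallest when $\rcr(S)$ attains its minimum value $\rcr(n)$. Replacing $\rcr(S)$ by $\rcr(n)$ therefore preserves the inequality and produces a bound that no longer refers to $S$:
\begin{equation*}
g_k^t(S) \ \ge\ c_1\cdot\binom{n}{k} - x\cdot\binom{n-4}{k-4}\cdot\rcr(n).
\end{equation*}

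Finally I would substitute the value $\rcr(n)=c_4\binom{n}{4}$ from Section~\ref{sec:gons} and simplify with the elementary identity
\begin{equation*}
\binom{n-4}{k-4}\binom{n}{4} \ =\ \frac{n!}{4!\,(k-4)!\,(n-k)!} \ =\ \binom{k}{4}\binom{n}{k},
\end{equation*}
obtained by expanding both sides as ratios of factorials. This turns the right-hand side into $\bigl(c_1 - x\cdot c_4\cdot\binom{k}{4}\bigr)\binom{n}{k}$, which is precisely bound~(\ref{eqn:kh_gons_general1}).

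There is essentially no real obstacle here, as this is a direct corollary; the only care needed is bookkeeping. One must keep track of the sign of $x$ so that the replacement $\rcr(S)\to\rcr(n)$ goes in the direction that keeps a \emph{lower} bound valid, and the substitution $\rcr(n)=c_4\binom{n}{4}$ should be understood in the asymptotic sense of the preceding subsection (equivalently, one may leave $\rcr(n)$ in place and pass to the limit, since $\rcr(n)/\binom{n}{4}\to c_4$).
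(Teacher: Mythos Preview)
Your proposal is correct and matches the paper's approach exactly: the paper derives this corollary by noting that for $x\le 0$ the term $-x\binom{n-4}{k-4}\rcr(S)$ in Inequality~(\ref{eqn:kh_gons_goal1}) is nonnegative, so one may replace $\rcr(S)$ by its minimum $\rcr(n)=c_4\binom{n}{4}$, and the binomial simplification you spell out is precisely the equality displayed in~(\ref{eqn:kh_gons_general1}). Your caveat about the asymptotic nature of $\rcr(n)=c_4\binom{n}{4}$ is also appropriate.
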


Accordingly, 
if $x$ is positive, we can generalize Inequality~(\ref{eqn:kh_gons_goal2})
to a general upper bound.

\begin{cor}
\label{cor:kh_gons_cr_ub}
Assume that for some constants $x \geq 0$ and $c_2$ arbitrary, the 
inequality $c_2  \geq  g_k^t(S') + x \cdot \rcr(S')$ is satisfied 
for all point sets $S'$ with cardinality $|S'|=k$.
Then for every point set $S$ with $|S| = n \geq  k$, 
the following upper bound applies to the number $g_k^t(S)$ of $k$-gons of type $t$ in $S$.
\begin{equation}
g_k^t(S)  \leq  \left(c_2 - x \cdot c_4 \cdot \binom{k}{4}\right) \binom{n}{k} \label{eqn:kh_gons_general2}
\end{equation}
\end{cor}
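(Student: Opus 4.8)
The plan is to mirror the proof of Corollary~\ref{cor:kh_gons_cr_lb}, reversing every inequality and using the hypothesis $x\ge 0$ in place of $x\le 0$. First I would apply Proposition~\ref{prop:kh_gons_rc}: the assumption $c_2\ge g_k^t(S')+x\cdot\rcr(S')$ for all $k$-point sets $S'$ is precisely the right-hand inequality of~(\ref{eqn:kh_gons_start}) (the left-hand one can be satisfied by taking $c_1=0$, since $g_k^t(S')+x\cdot\rcr(S')\ge 0$ when $x\ge 0$), so Inequality~(\ref{eqn:kh_gons_goal2}) applies and gives
\begin{equation*}
g_k^t(S)\ \le\ c_2\binom{n}{k}-x\binom{n-4}{k-4}\rcr(S)
\end{equation*}
for every $n$-point set $S$ with $n\ge k$.

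Next I would make the bound independent of $S$. Since $x\ge 0$, the quantity $-x\binom{n-4}{k-4}\rcr(S)$ is non-increasing in $\rcr(S)$, hence it is largest when $\rcr(S)$ is smallest; and by definition $\rcr(S)\ge\rcr(n)$ for every $S$ on $n$ points. Replacing $\rcr(S)$ by $\rcr(n)$ therefore only enlarges the right-hand side, yielding $g_k^t(S)\le c_2\binom{n}{k}-x\binom{n-4}{k-4}\rcr(n)$.

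Finally I would substitute $\rcr(n)\ge c_4\binom{n}{4}$ exactly as in Corollary~\ref{cor:kh_gons_cr_lb}; this preserves the inequality because the coefficient $-x\binom{n-4}{k-4}$ is non-positive. Combining this with the identity $\binom{n-4}{k-4}\binom{n}{4}=\binom{k}{4}\binom{n}{k}$ rewrites the bound in the claimed closed form $\bigl(c_2-x\cdot c_4\cdot\binom{k}{4}\bigr)\binom{n}{k}$, which is~(\ref{eqn:kh_gons_general2}). I do not expect any genuine obstacle here: the argument is the sign-symmetric counterpart of the previous corollary, and the only thing to watch is that the hypothesis $x\ge 0$ is exactly what makes both the replacement of $\rcr(S)$ by $\rcr(n)$ and the subsequent substitution of $c_4\binom{n}{4}$ for $\rcr(n)$ go in the direction needed for an upper bound.
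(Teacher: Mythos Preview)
Your argument is correct and coincides with the paper's approach: the paper does not spell out a separate proof for this corollary but simply says ``Accordingly, if $x$ is positive, we can generalize Inequality~(\ref{eqn:kh_gons_goal2}) to a general upper bound,'' i.e., exactly the sign-symmetric version of the derivation of Corollary~\ref{cor:kh_gons_cr_lb} that you wrote out. Your additional remark about choosing $c_1=0$ to formally satisfy both sides of~(\ref{eqn:kh_gons_start}) is harmless but unnecessary, since the derivation of~(\ref{eqn:kh_gons_goal2}) in Proposition~\ref{prop:kh_gons_rc} uses only the upper inequality in~(\ref{eqn:kh_gons_start}).
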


	In each of the bounds resulting from Proposition~\ref{prop:kh_gons_rc}, either $c_1$ or $c_2$ is not used. 
	So of course, for independent optimization of the two bounds, it might be helpful to consider the pairs $(c_1,x)$, 
	and $(c_2,x)$ independently, with possibly different optimal values for $x$. 
	On the other hand, optimizing all 
	three values $c_1$, $c_2$, and $x$ simultaneously results in bounds that are more easy to compare.
	In the following we optimize in these two different ways. On the one hand we try to minimize the difference between 
	$c_1$ and $c_2$ to obtain most possibly small ranges for the number of (some type of) $k$-gons in sets with a 
	certain rectilinear crossing number. On the other hand we independently optimize $(c_1,x)$ and $(c_2,x)$ in order to obtain 
	general bounds (meaning bounds that are independent from the rectilinear crossing number of a given set) by applying 
	Corollaries~\ref{cor:kh_gons_cr_lb} and~\ref{cor:kh_gons_cr_ub} .

	Note that concerning the general bounds, lower bounds only make sense for the classical 
	question about convex $k$-gons, as the numbers of general and non-convex $k$-gons are 
	minimized by sets in convex position. 
	Similarly, general upper bounds for non-convex or (general) $k$-gons are
	of interest while the maximum number of convex $k$-gons is $\binom{n}{k}$, 
	again achieved by convex sets.

	Recall that the number of $k$-gons (of whatever type) in a point set $S$ 
	only depends on the combinatorial properties and thus on the order type $OT(S)$ of the 
	underlying point set $S$.
	Thus, we calculate pairs $(g_k^t(OT),\rcr(OT))$ for all possible order types $OT$
	of $k$ points, this way obtaining all possible pairs 
	$(g_k^t(S'),\rcr(S'))$ that can occur for any point set $S'$ with  $|S'|=k$.
	For the calculation, we use the order type database~\cite{AK01}, that contains 
	a complete list of the order types of up to $11$ points. 
	Having this, we can optimize the values $c_1$, $c_2$, and $x$ that 
	fulfill~(\ref{eqn:kh_gons_start}), 
	obtaining the according relations~(\ref{eqn:kh_gons_goal1}) and~(\ref{eqn:kh_gons_goal2}).
	Tables~\ref{tab:ordertype_relations_relative} and~\ref{tab:ordertype_relations} 
	show an overview of the resulting relations.
	Note that trivial bounds on the numbers of $k$-gons can be obtained by assuming the theoretic 
	possible maximum number of $k$-gons for each $k$-tuple. 
	As the maximum numbers of general / non-convex $k$-gons in a $k$-tuple are 8, 29, and 92 
	for $k \in \{5,6,7\}$, the bounds in Table~\ref{tab:ordertype_relations} all improve over these 
	trivial bounds.

	\begin{table}[htb]
	\begin{center}
	{\small
	\renewcommand{\arraystretch}{1.45}
	\setlength{\tabcolsep}{3pt}
	\begin{tabular} {rclcl}
	\hline
	\vspace{-10pt}\\
	$-0.75 \binom{n}{5} + 0.25 \cdot (n-4) \cdot \rcr(S)$ & $\leq$ & 
		$g_5^{conv}(S)$ & $\leq$ & 
		$-0.25 \binom{n}{5} + 0.25 \cdot (n-4) \cdot \rcr(S)$\\
	 &  & 
		$g_5^{non-conv}(S)$ & $=$ & $10 \binom{n}{5} - 2 \cdot (n-4) \cdot \rcr(S)$ \\
	$9.25 \binom{n}{5} - 1.75 \cdot (n-4) \cdot \rcr(S)$ & $\leq$ & 
		$g_5^{gen}(S)$ & $\leq$ & 
		$9.75 \binom{n}{5} - 1.75 \cdot (n-4) \cdot \rcr(S)$\\
	$- \binom{n}{6} + 0.08\dot{3} \binom{n-4}{2} \cdot \rcr(S)$ & $\leq$ & 
		$g_6^{conv}(S)$ & $\leq$ & 
		$-0.25 \binom{n}{6} + 0.08\dot{3} \binom{n-4}{2} \cdot \rcr(S)$\\
	$29\frac{4}{9} \binom{n}{6} - \frac{22}{9} \binom{n-4}{2} \cdot \rcr(S)$ & $\leq$ & 
		$g_6^{non-conv}(S)$ & $\leq$ & 
		$36\frac{6}{9} \binom{n}{6} - \frac{22}{9} \binom{n-4}{2} \cdot \rcr(S)$\\
	$28\frac{1}{3} \binom{n}{6} - \frac{7}{3} \binom{n-4}{2} \cdot \rcr(S)$ & $\leq$ & 
		$g_6^{gen}(S)$ & $\leq$ & 
		$36 \binom{n}{6} - \frac{7}{3} \binom{n-4}{2} \cdot \rcr(S)$\\
	$-1.1\overline{923076} \binom{n}{7} + 0.0\overline{384615} \binom{n-4}{3} \cdot \rcr(S)$ & $\leq$ & 
		$g_7^{conv}(S)$ & $\leq$ & 
		$-0.3\overline{461538} \binom{n}{7} + 0.0\overline{384615} \binom{n-4}{3} \cdot \rcr(S)$\\
	$86.\overline{230769} \binom{n}{7} - 3.\overline{538461} \binom{n-4}{3}\cdot \rcr(S)$ & $\leq$ & 
		$g_7^{non-conv}(S)$ & $\leq$ & 
		$123.\overline{846153} \binom{n}{7} - 3.\overline{538461} \binom{n-4}{3} \cdot \rcr(S)$\\
	$85.5 \binom{n}{7} - 3.5 \binom{n-4}{3} \cdot \rcr(S)$ & $\leq$ & 
		$g_7^{gen}(S)$ & $\leq$ & 
		$123.5 \binom{n}{7} - 3.5 \binom{n-4}{3} \cdot \rcr(S)$\\
	\vspace{-10pt}\\
	\hline
	\end{tabular}\vspace{-2ex}
	}
	\end{center}
	\caption{Bounding the number of $k$-gons in an $n$ point set $S$ via its rectilinear crossing number $\rcr(S)$.}
	\label{tab:ordertype_relations_relative}
	\end{table}

\begin{table}[htb]
\begin{center}
{\small
\renewcommand{\arraystretch}{1.40}
\setlength{\tabcolsep}{3pt}
\begin{tabular} {lcrrr}
\hline
\vspace{-10pt}\\
$g_5^{non-conv}(S)$ & $\leq$ & $(10-10 \cdot c_4)\binom{n}{5}$ & $\approx$ & $ 6.20\binom{n}{5}$ \\
$g_5^{gen}(S)$ & $\leq$ & $(9.75-8.75 \cdot c_4)\binom{n}{5}$ & $\approx$ & $ 6.43\binom{n}{5}$ \\
$g_6^{non-conv}(S), \ g_6^{gen}(S)$ & $\leq$ & $(36 - 35 \cdot c_4)\binom{n}{6}$ & $\approx $ & $22.7\binom{n}{6}$ \\
$g_7^{non-conv}(S)$ & $\leq$ & $(123.\overline{846153} - 123.\overline{846153} 
    \cdot c_4)\binom{n}{7}$ & $\approx $ & $75.64\binom{n}{7}$ \\
$g_7^{gen}(S)$ & $\leq$ & $(123.5 - 122.5 \cdot c_4)\binom{n}{7}$ & $\approx $ & $76.95\binom{n}{7}$ \\
\vspace{-10pt}\\
\hline
\end{tabular}\vspace{-2ex}
}
\end{center}
\caption{Bounding the number of $k$-gons in an $n$ point set $S$ via the constant $c_4$ of the (minimum) rectilinear crossing number $\rcr(n)$, $\rcr(n) = c_4\binom{n}{4}$.}
\label{tab:ordertype_relations}
\end{table}

From the calculations it can be seen that for all sets of size $k \leq 7$, 
the point sets reaching the maximum number of general or 
non-convex $k$-gons are at the same time minimizing the number of crossings. 
The same is true for $k=8$.
But continuing the calculations until $k=9$, it turns out that this is not true in general. 
The (combinatorially unique) point set containing the maximum number of 1282 general 9-gons has 38 crossings and 
thus does not reach the (minimum) rectilinear crossing number $\rcr(9)=36$~\cite{web}.

\subsection{$k$-gons, polygonizations, and the double chain\index{polygonization|(}\index{double chain}}

Polygonizations, also called spanning cycles, can be considered 
as $k$-gons of maximal size (i.e., $k=n$). 
Garc\'ia et al.\!~\cite{GNT00} construct a point set with
$\Omega(4.64^n)$ spanning cycles, the so-called double chain $DC(n)$, 
which is currently the best known minimizing example; see Figure~\ref{fig:dchain_orig}.

\begin{figure}[htb]
  \centering
  \includegraphics[scale=0.5]{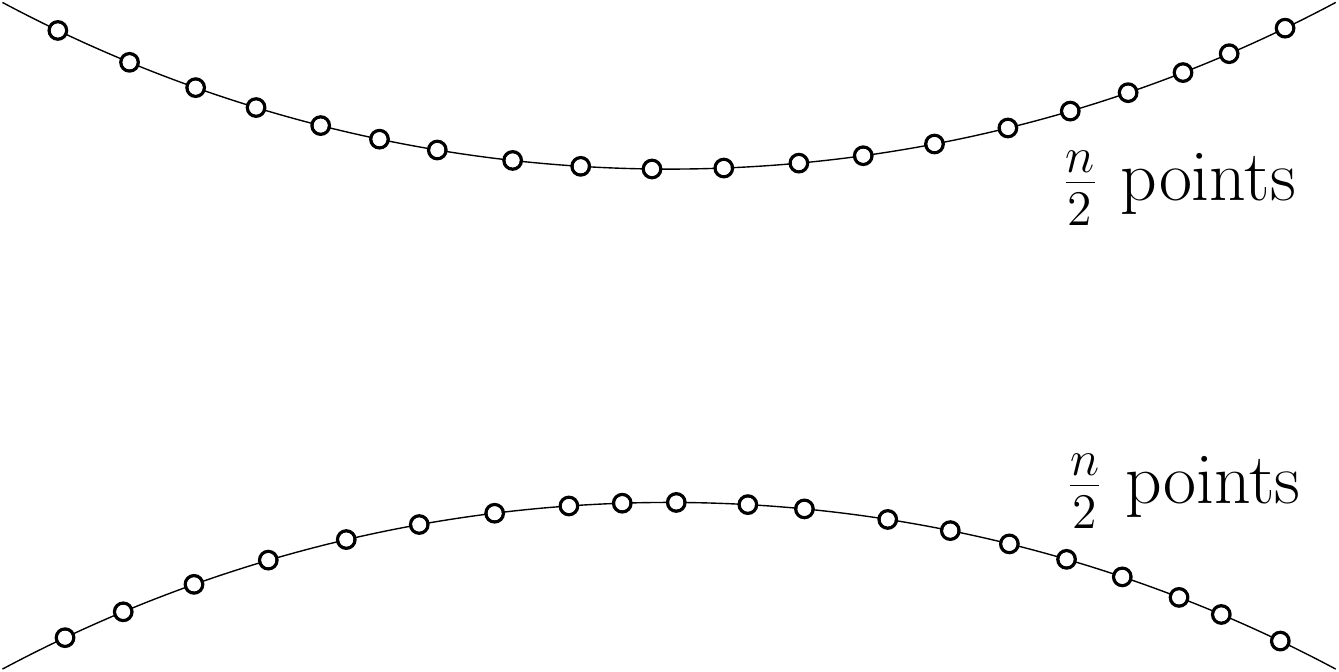} 
  \caption{The so-called double chain $DC(n)$.}
  \label{fig:dchain_orig}
\end{figure}

The upper bound on the number of spanning cycles of any $n$-point set 
was improved several times during the last years, most recently to
$O(68.664^n)$~\cite{dsst-bmmsc-11} and 
$O(54.543^n)$~\cite{ssw-gpg-13}, neglecting polynomial factors in
the asymptotic expressions. The minimum is achieved by point sets 
in convex position, which have exactly one spanning cycle. 
For the number of general $k$-gons this implies 
a lower bound of ${n \choose k}$, as well as an 
upper bound of $O\left((54.543^k{n \choose k}\right)$. 
Hence, for constant $k$, any point set has $\Theta(n^k)$ general $k$-gons.
On the other hand, the double chain
provides $\Omega(n^k)$ non-convex \mbox{$k$-gons}, 
where $k \geq 4$ is again a constant. To see this, 
choose one vertex from the upper chain of
$DC(n)$ and $k-1 \geq 3$ vertices from the lower chain of $DC(n)$, and connect
them to a simple, non-convex polygon. This gives at least $\frac{n}{2} {n/2
  \choose k-1} = \Omega(n^k)$ non-convex $k$-gons. As the lower bound
on the maximal number of non-convex $k$-gons asymptotically matches 
the upper bound on the maximal number of general $k$-gons, we obtain
the following result.

\begin{prop}
\label{prop:kgons}
For any constant $k \geq 4$, the number of non-convex \mbox{$k$-gons} in a set of
$n$ points is bounded by $O(n^k)$. This is tight in the sense that there exist sets with
$\Omega(n^k)$ non-convex \mbox{$k$-gons}.
\end{prop}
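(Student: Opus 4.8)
The plan is to prove the two halves of Proposition~\ref{prop:kgons} separately, since the upper bound follows immediately from material already developed and the lower bound needs an explicit construction.

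\textbf{Upper bound.} For the $O(n^k)$ upper bound on the number of non-convex $k$-gons, I would simply observe that every non-convex $k$-gon is in particular a general $k$-gon (a simple polygon on $k$ points of $S$). Hence the number of non-convex $k$-gons is at most the number of general $k$-gons. By the discussion preceding the proposition, the number of general $k$-gons of an $n$-point set is at most $O(54.543^k \binom{n}{k})$ by combining the single-set bound on spanning cycles with the standard summation over all $\binom{n}{k}$ subsets of size $k$; for constant $k$ this is $O(n^k)$. (Alternatively, one can cite the even cruder bound that any $k$ points span at most $(k-1)!/2$ simple polygons, giving at most $\frac{(k-1)!}{2}\binom{n}{k} = O(n^k)$ general and hence non-convex $k$-gons.) This direction is routine and is essentially a one-line argument.

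\textbf{Lower bound.} For the matching $\Omega(n^k)$ lower bound I would exhibit the double chain $DC(n)$ of Garc\'ia et al.~\cite{GNT00} as the witness set, as already sketched in the text. Recall that $DC(n)$ consists of an upper convex chain and a lower convex chain, each with roughly $n/2$ points, positioned so that the two chains "see" each other in a reflex manner. The key geometric fact I would make precise is: if we pick any one vertex $u$ from the upper chain and any $k-1 \ge 3$ vertices $w_1,\dots,w_{k-1}$ from the lower chain (listed in their natural left-to-right order along the lower chain), then the polygon $u\,w_1\,w_2\cdots w_{k-1}$ — traversed as a fan from $u$ — is a \emph{simple} polygon, and it is \emph{non-convex} because the lower chain is itself convex (curving away from $u$), so the internal angles at $w_2,\dots,w_{k-2}$ cannot all be convex; equivalently, $u$ and the $w_i$ are not in convex position since the $w_i$ form a concave chain as seen from $u$. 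Distinct choices of $(u;w_1,\dots,w_{k-1})$ give distinct polygons (they have distinct vertex sets or, when the vertex set coincides, we only count one representative, which is fine for a lower bound since we may restrict to, say, polygons with exactly one upper vertex and fix the cyclic order). Counting: there are about $n/2$ choices for $u$ and $\binom{n/2}{k-1}$ choices for the lower vertices, giving at least $\frac{n}{2}\binom{n/2}{k-1} = \Omega(n^k)$ non-convex $k$-gons, which for constant $k$ matches the upper bound asymptotically.

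\textbf{Main obstacle.} The only nontrivial point is justifying simplicity and non-convexity of the fan polygon $u\,w_1\cdots w_{k-1}$ uniformly over all such choices; this requires spelling out the position/visibility properties of $DC(n)$ precisely (in particular that $u$ sees the entire lower chain and that the lower chain is convex toward the side facing $u$), so that the edges $u w_1$ and $u w_{k-1}$ together with the subchain $w_1\cdots w_{k-1}$ bound a simple region and the subchain contributes at least one reflex vertex. Everything else — the binomial count and the reduction of the upper bound to the already-established general-$k$-gon bound — is immediate.
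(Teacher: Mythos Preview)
Your proposal is correct and matches the paper's own argument essentially line for line: the upper bound is obtained exactly as you say (non-convex $k$-gons are general $k$-gons, hence $O(n^k)$ for constant $k$), and the lower bound uses the double chain $DC(n)$ with one vertex from the upper chain and $k-1\ge 3$ vertices from the lower chain, giving $\frac{n}{2}\binom{n/2}{k-1}=\Omega(n^k)$ non-convex $k$-gons. The paper treats the simplicity/non-convexity of the resulting polygon as evident from the double-chain construction, so your ``main obstacle'' is exactly the detail the paper leaves implicit.
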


\section{Maximizing the number of (general) \mbox{$k$-holes}}
\label{sec:max_gen_holes}

In~\cite{4HOLES_CGTA} it is shown that the number of \mbox{4-holes} is maximized for 
point sets in convex position if $n$ is sufficiently large. It was
conjectured that this is true for any constant $k \geq 4$.
The following theorem settles this conjecture in the affirmative. 

\begin{theorem}
\label{thm:convex_k}
  For every $k\geq 4$ and $n \geq 2(k-1)! \binom{k}{4}\!+\!k\!-\!1$, the number of 
  \mbox{$k$-holes} is maximized by a set of $n$ points in convex position.
\end{theorem}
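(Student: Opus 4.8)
The plan is to compare, for a point set $S$ of $n$ points in general position, the number of $k$-holes in $S$ with the number $\binom{n}{k}$ of $k$-holes in a convex set of the same size, and to show that $S$ never beats the convex bound once $n$ is large enough. Since a convex set has exactly one $k$-hole per $k$-subset, namely $\binom{n}{k}$ in total, it suffices to prove that any $S$ has at most $\binom{n}{k}$ $k$-holes. If $S$ itself is in convex position we are done, so assume $S$ has at least one point in the interior of the convex hull; the goal is then to exhibit an injection (or a counting argument with slack) showing the total number of $k$-holes of $S$ is at most $\binom{n}{k}$.

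The key structural observation I would use is that every $k$-hole $H$ of $S$, convex or not, has a vertex set $V(H)$ of size $k$ that is itself ``almost'' determined by $H$: a non-convex $k$-hole has a reflex vertex, and a $k$-point subset in non-convex position can be the vertex set of several distinct $k$-holes. So the natural accounting is: for each $k$-subset $T\subseteq S$, let $p(T)$ be the number of $k$-holes with vertex set exactly $T$. If $T$ is in convex position and its convex hull is empty of other points of $S$, then $p(T)=1$; if $T$ is in convex position but its hull contains a point of $S$, then $p(T)=0$. If $T$ is in non-convex position, $p(T)$ can be as large as $(k-1)!$-ish (the number of simple polygonizations of $k$ points), but such a $T$ is useful only if the relevant polygon is empty. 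The strategy is to pay for each ``extra'' $k$-hole (those beyond the one convex hole per convex empty subset) using the convex $k$-subsets that are killed because they contain an interior point. Concretely: let $a$ be the number of $k$-subsets in convex position with empty hull, $b$ the number in convex position with non-empty hull, and for the non-convex subsets let $E$ be the total number of empty polygonizations they support. Then the number of $k$-holes of $S$ is $a+E$, and $a+b+(\text{non-convex subsets}) = \binom{n}{k}$, so it is enough to show $E \le b + (\text{non-convex subsets})$, i.e., that the non-convex empty $k$-gons can be charged injectively to convex $k$-subsets that fail to be holes plus the non-convex subsets themselves — with the bound $n \geq 2(k-1)!\binom{k}{4}+k-1$ providing exactly the slack needed.

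To realize the charging I would fix an interior point $q$ of $S$ (one exists by assumption) and argue that, because of $q$, many $k$-subsets are ``blocked.'' More usefully: take any non-convex $k$-hole $H$ with vertex set $T$ and a reflex vertex $v$. Deleting $v$ and its two incident edges and re-closing the gap gives a simple $(k-1)$-gon on $T\setminus\{v\}$; alternatively, I would try to associate to $H$ a $4$-subset of $T$ that is in non-convex position (a reflex vertex together with the triangle that ``sees'' it), giving the $\binom{k}{4}$ factor, and to each such $4$-subset at most $(k-1)!$ holes can be attached (bounding polygonizations of the remaining $k-4$ points), giving the $(k-1)!\binom{k}{4}$ factor, and the factor $2$ and the shift $k-1$ absorb lower-order terms. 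The hull-interior point $q$ then furnishes, for each non-convex configuration, a disjoint convex empty $(k-1)$-subset that can be enlarged by $q$ into a convex $k$-subset containing an interior point — i.e., a member of the ``$b$'' pool — and one checks this map is injective once $n$ is as large as stated.

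The main obstacle, as usual in these extremal hole problems, is the \emph{empty}-ness bookkeeping: the crude count ``a non-convex $k$-subset has at most $(k-1)!$ polygonizations'' massively overcounts, but it is not obvious how to beat it \emph{and} simultaneously control which of those polygonizations are point-free. I expect the real work to be setting up the injection from non-convex $k$-holes to ``wasted'' convex $k$-subsets so that no target is hit more than a bounded number of times, and then verifying that the threshold $n \geq 2(k-1)!\binom{k}{4}+k-1$ is exactly what makes the number of available wasted convex subsets (which grows like $\binom{n-1}{k-1}$ once a single interior point is present) exceed the number of non-convex $k$-holes (bounded by $(k-1)!\binom{k}{4}$ times something like $\binom{n}{k-3}$). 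Balancing these two binomial expressions, rather than any single clever geometric lemma, is where the stated bound on $n$ comes from, and getting the constants to line up is the part I would be most careful about.
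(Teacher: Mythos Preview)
Your high-level framing is right: write the number of $k$-holes as (convex $k$-holes) $+$ (non-convex $k$-holes), bound the first by $\binom{n}{k}-N$ where $N$ counts $k$-subsets that fail to give a convex $k$-hole, and then show the non-convex $k$-holes are at most~$N$ once $n$ is large. You also briefly hit on the right reduction step---remove a reflex vertex to get a simple $(k-1)$-gon---before abandoning it. The gap is in the charging mechanism you actually propose.

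Fixing a single interior point $q$ and trying to map each non-convex $k$-hole to a ``wasted'' convex $k$-subset through~$q$ does not work: a non-convex $k$-hole can live arbitrarily far from~$q$, and there is no reason any convex $(k-1)$-subset near it should combine with~$q$ to form a non-empty convex $k$-subset, let alone injectively. Likewise, your alternative of tagging a non-convex $4$-subset inside the hole and then multiplying by ``$(k-1)!$ polygonizations of the remaining $k-4$ points'' does not match the arithmetic and leaves the injection undefined. The organizing device the paper uses---and which your sketch is missing---is the count $T$ of \emph{non-empty triangles}. Every non-convex $k$-hole has its three extreme vertices forming some non-empty triangle~$\Delta$; conversely, every $k$-subset containing the three vertices of a non-empty triangle fails to be a convex $k$-hole. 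This lets \emph{both} sides be bounded as multiples of the same quantity~$T$: non-convex $k$-holes $\le \frac{(k-1)!}{2}\binom{n-3}{k-4}\cdot T$ (choose the remaining $k-4$ vertices, polygonize the $(k-1)$-gon, then add the reflex vertex in at most $k-1$ ways via the geodesic), while $N \ge \binom{n-3}{k-3}\cdot T/\binom{k}{3}$. The parameter~$T$ cancels, and comparing the two coefficients is exactly where $n \ge 2(k-1)!\binom{k}{4}+k-1$ appears, since $\binom{n-3}{k-3}=\binom{n-3}{k-4}\cdot\frac{n-k+1}{k-3}$ and $(k-3)\binom{k}{3}=4\binom{k}{4}$. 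Without routing both counts through non-empty triangles you have no common currency, and the injection you describe cannot be made precise.
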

\begin{proof}
  Consider a non-convex \mbox{$k$-hole} $H$. 
  For each of its non-extreme vertices (i.e., vertices not on the convex hull of $H$), 
  there exists a triangle spanned by three extreme vertices of $H$ such that the 
  non-extreme vertex is contained in the interior of the triangle. 
  Further, there exists at least one reflex (and thus non-extreme) vertex $v_r$ 
  of $H$ such that removing $v_r$ from the vertex set of $H$ 
  (and connecting its incident vertices) results in a simple non-empty \mbox{$(k\!-\!1)$-gon}. 
  To see the latter, consider an edge $e$ of $\CH(H)$ which is not in the boundary of $H$. 
  Together with some part of the boundary of $H$, $e$ forms a simple polygon $H'$ that is interior-disjoint with $H$. 
  For the case where $H'$ is just a triangle, $e$ can be used to cut off the third vertex of $H'$ from $H$. 
  Clearly, the resulting \mbox{$(k\!-\!1)$-gon} is simple.
  If $H'$ has at least four vertices then any triangulation of $H'$ has at least two ears.
  For any ear not incident to $e$, the according diagonal of the triangulation can be used to cut off the central vertex of the ear from $H$. 
  Again, the resulting \mbox{$(k\!-\!1)$-gon} is simple.

  Now consider a non-empty triangle $\Delta$. We give an upper bound for the number of non-convex $k$-holes having the three vertices of $\Delta$ as extreme points.
  Denote by $\mathcal{K}$ the set of simple non-empty \mbox{$(k\!-\!1)$-gons} having the vertices of 
  $\Delta$ on their convex hull. 
  First, $|\mathcal{K}|$  can be bounded from above by the number of simple, 
  possibly empty \mbox{$(k\!-\!1)$-gons} having the three vertices of 
  $\Delta$ on their boundary, that is, $|\mathcal{K}| \leq \frac{(k-2)!}{2} \binom{n-3}{k-4}$.

  Further, every simple \mbox{$(k\!-\!1)$-gon} in $\mathcal{K}$ may be completed to a 
  simple non-convex \mbox{$k$-hole} in at most $k\!-\!1$ ways  by adding a reflex vertex: 
  As the resulting polygon has to be empty, we have to use the inner geodesic connecting the two adjacent vertices of the \mbox{$(k\!-\!1)$-gon}.
  Only if this geodesic contains exactly one point, we do obtain one non-convex \mbox{$k$-hole}.
  Thus the number of non-convex \mbox{$k$-holes} having all vertices of $\Delta$ on their convex hull is bounded 
  from above by 
  \begin{equation*}
  (k-1) \frac{(k-2)!}{2} \binom{n-3}{k-4} = \frac{(k-1)!}{2} \binom{n-3}{k-4}. 
  \end{equation*}

  Considering convex \mbox{$k$-holes}, observe that every $k$-tuple gives at most one convex \mbox{$k$-hole}. 
  Denote by $N$ the number of $k$-tuples that do \emph{not} form a convex \mbox{$k$-hole}, 
  and by $T$ the number of non-empty triangles. Then we get~(\ref{eqn:k_bound1})
  as a first upper bound on the number of (general) \mbox{$k$-holes} of a point set.
  \begin{equation}
  \label{eqn:k_bound1}
  \binom{n}{k} - N + \left(\frac{(k-1)!}{2} \binom{n-3}{k-4}\right) \cdot T 
  \end{equation}

  To obtain an improved upper bound from~(\ref{eqn:k_bound1}), 
  we need to derive a good lower bound for $N$. To this end, 
  consider again a non-empty triangle $\Delta$. 
  As $\Delta$ is not empty, 
  none of the $\binom{n-3}{k-3}$ \mbox{$k$-tuples} that contain all three 
  vertices of $\Delta$ forms a convex \mbox{$k$-hole}.
  On the other hand, for such a \mbox{$k$-tuple}, all of its $\binom{k}{3}$ contained triangles
  might be non-empty. Thus, we obtain $T\cdot\binom{n-3}{k-3} / \binom{k}{3}$ as a lower bound for $N$ 
  and~(\ref{eqn:k_bound2}) as an upper bound for the number of \mbox{$k$-holes}.
  \begin{equation}
  \label{eqn:k_bound2}
  \binom{n}{k} + \left(\frac{(k-1)!}{2} \binom{n-3}{k-4} - \frac{\binom{n-3}{k-3}}{\binom{k}{3}}\right) \cdot T 
  \end{equation}
  
  For $n \geq 2(k-1)! \binom{k}{4} + k - 1$ this is at most
  $\binom{n}{k}$, the number of \mbox{$k$-holes} of a set of $n$~points in
  convex position, which proves the theorem.
\end{proof}

The above 
theorem states that convexity maximizes the number of 
\mbox{$k$-holes} for $k = O(\frac{\log n}{\log\log n})$ and sufficiently 
large $n$. Moreover, the
proof implies that any non-empty triangle in fact reduces 
the number of empty \mbox{$k$-holes}. 
Thus it follows that, for $k = O(\frac{\log n}{\log \log n})$ 
and $n$ sufficiently large, the 
maximum number of convex \mbox{$k$-holes} is strictly larger than 
the maximum number of non-convex \mbox{$k$-holes}; see also the next section.

At the other extreme, 
for $k\!\approx \! n$ the statement does not hold: 
As already mentioned in the introduction, 
a set of $k$ points spans at most one convex \mbox{$k$-gon},
but might admit exponentially many different non-convex $k$-gons~\cite{GNT00}.
This leads to the question, for which $k$ the situation changes.
The following theorem implies that for some $0<c<1$ and every $k \geq c\cdot n$, 
the convex set does not maximize the number of $k$-holes.

\begin{theorem}
\label{thm:dchain} 
The number of \mbox{$k$-holes} in the double chain $DC(n)$ on $n$ points is at least 
\[\binom{\frac{n-4}{2}}{\frac{n-k}{2}} \cdot  \frac{n-k+2}{2} \cdot \Omega(4.64^k). \]
\end{theorem}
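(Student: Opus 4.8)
The plan is to lower-bound the number of $k$-holes in $DC(n)$ by counting a restricted family of $k$-holes that has a clean product structure: pick a suitable interval of consecutive points on the lower chain to serve as a ``long'' polygonization, and then attach a convex fan of the remaining points. Recall that the double chain consists of an upper chain $U$ and a lower chain $L$, each of roughly $n/2$ points, with the property that every point of $L$ sees every point of $U$ ``from below'' and the two chains are mutually convex-visible. The construction of García, Noy and Tejel~\cite{GNT00} shows that a set of $m$ consecutive points forming one chain (together with its convex-position complement) admits $\Omega(4.64^m)$ simple spanning cycles, and crucially these spanning cycles can be taken to be \emph{empty} with respect to the remaining points, because the remaining points lie in convex position on the far side and can be appended without creating any enclosed point.

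First I would fix the split: of the $n$ points, let $k$ be the target hole size, and reserve $\frac{n-k}{2}$ points on the lower chain and the symmetric number on the upper chain to lie \emph{outside} the hole, leaving a sub-double-chain on $k$ points (say $\frac{k}{2}$ from each side, up to parity and the additive constants $n-4$, $n-k+2$ that appear in the statement). The binomial factor $\binom{(n-4)/2}{(n-k)/2}$ then counts the number of ways to choose which points of the lower chain are omitted (the $-4$ accounts for the four extreme points of $DC(n)$ that must be kept as they are needed to keep the retained sub-configuration a genuine double chain, and to guarantee emptiness); the linear factor $\frac{n-k+2}{2}$ is a similar choice count on the upper side or an index for where the retained block sits. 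Second, on the retained $k$-point double chain I would invoke the García–Noy–Tejel bound to get $\Omega(4.64^k)$ distinct simple $k$-gons on exactly those $k$ points; because the omitted points all lie on the convex hull side of the two chains, none of them falls inside any of these $k$-gons, so each is in fact a $k$-hole of $DC(n)$. Multiplying the three independent counts gives the claimed bound.

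The main obstacle is the emptiness bookkeeping: one has to verify that the omitted lower-chain and upper-chain points genuinely lie in the exterior of \emph{every} polygonization counted by the García–Noy–Tejel construction on the retained block, not merely of some ``typical'' one. This requires checking that in $DC(n)$ a point of $L$ omitted from the block lies strictly below (in the region unreachable by) any simple cycle through a consecutive sub-block of $L$ and the matching part of $U$ — which follows from the convexity of each chain and the standard visibility properties of the double chain, but must be argued carefully for the reflex edges that the non-convex polygonizations necessarily use. A secondary, purely arithmetic obstacle is matching the exact constants ($-4$ in the binomial, $n-k+2$ in the linear factor, and the exact split of $k$ between the two chains) so that the parities work out and the retained configuration still contains all four extreme points; I would handle this by a short case analysis on $n-k \bmod 2$ and by being slightly wasteful (discarding $O(1)$ extra points), which only affects the hidden constant in the $\Omega(4.64^k)$ term. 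Once these two points are settled, the bound is immediate from multiplying the three counts, so I expect the visibility/emptiness verification to be where essentially all the real work lies.
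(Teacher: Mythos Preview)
Your proposal has a genuine gap at exactly the place you flag as the ``main obstacle'': the emptiness claim is simply false. If you choose an arbitrary subset of $k$ points from $DC(n)$ forming a sub-double-chain and then take a Garc\'ia--Noy--Tejel polygonization of it, the omitted points will in general lie \emph{inside} that polygon. For instance, each chain is in convex position, so an omitted lower-chain point lying between two retained lower-chain neighbors sits above the chord joining them and hence inside the convex hull of the retained $k$ points; many (indeed most) of the GNT polygonizations will swallow it. The convexity of each chain works against you here, not for you, and no ``standard visibility property'' rescues this.

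The paper's proof circumvents this by reversing the order of choices. It first fixes a $k$-point double chain with two extra sentinel points $p,q$, and observes that each GNT path can be closed into two polygonizations $P_p$ and $P_q$ whose interiors are complementary. Hence for one of the two, at least half of the $k+2$ insertion slots along the chains lie \emph{outside}. Only then are the remaining $n-k$ points distributed into those outside slots, and the count is a stars-and-bars computation: the factor $\binom{(n-4)/2}{(n-k)/2}$ is the number of ways to place $(n-k)/2$ points into roughly $k/2$ outside slots on one chain, and $\frac{n-k+2}{2}$ is the corresponding count on the other chain when only the two extreme slots are guaranteed. So your reading of the two factors (as a subset choice and a block position) is also off; they arise from multiset counting of insertions, not from choosing which points to delete.
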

\begin{proof}
Recall that 
Garc\'ia et al.\!~\cite{GNT00} showed that the double 
chain on $n$ points ($n/2$ points on each chain)
admits $\Omega (4.64^n)$ polygonizations. 
To estimate the number of \mbox{$k$-holes} of the double chain on $n$ points, we first use this result 
for a double chain on $k$ points ($k/2$ points on each chain), 
obtaining $\Omega (4.64^k)$ different \mbox{$k$-polygonizations}. Then we
distribute the remaining $n-k$ points among all possible positions, 
meaning that for each $k$-polygonization, we obtain 
the double chain on $n$~points with 
a \mbox{$k$-hole} drawn; see Figure~\ref{fig:dchain}.

In their proof, Garc\'ia et al.\! count paths that start at the first vertex of the upper chain and end 
at the last vertex of the lower chain. Before the first 
vertex on the lower chain, they add an additional point $q$ 
to complete these paths to polygonizations. 
We slightly extend this principle, by also adding an additional point $p$ on the upper chain after the last 
vertex; see Figure~\ref{fig:dchain_path}.
\begin{figure}[htb]
  \centering
  \includegraphics[scale=0.55]{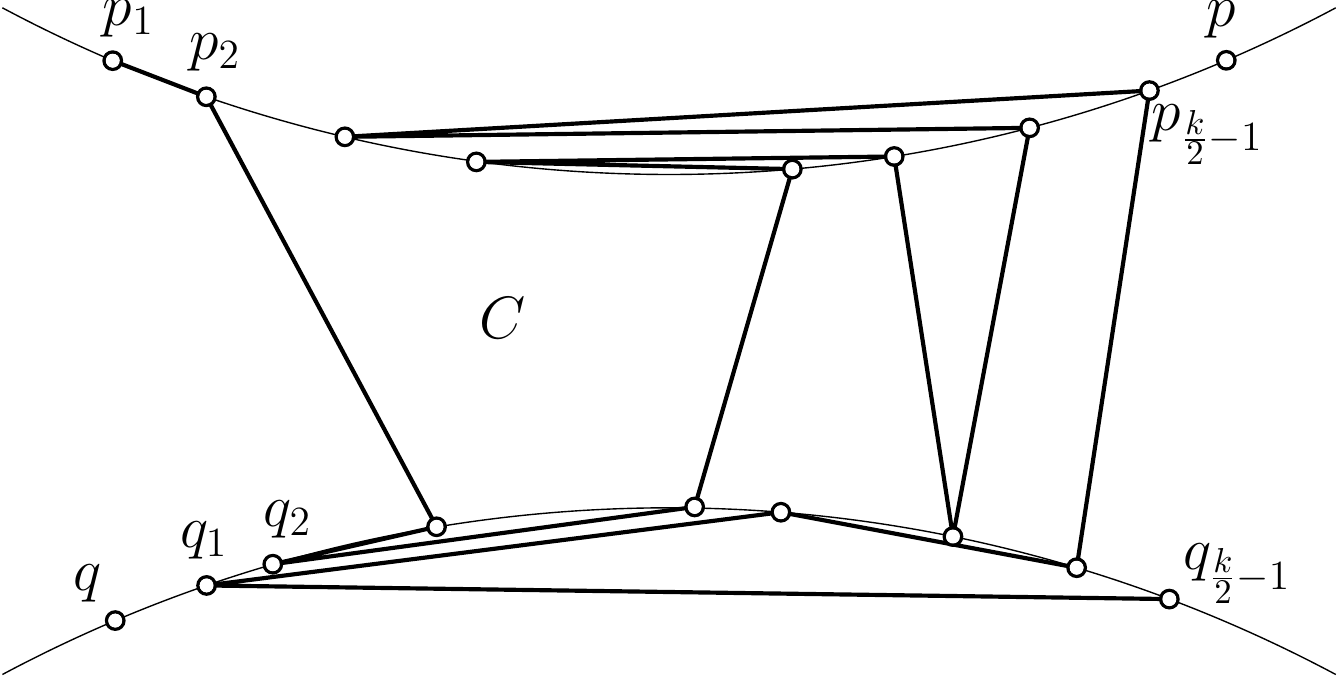} 
  \caption{A path $C$ in the double chain, using all but the vertices $p$ and~$q$. }
  \label{fig:dchain_path}
\end{figure}

Then we complete each path 
$C$ to a polygonization in one of the following ways: 
Either we add $p$ to $C$ directly next to $p_{\frac{k}{2}-1}$ and then complete $C$ via $q$, 
obtaining $P_q$, or we add $q$ to $C$ directly 
next to $q_1$, and close the polygonization via $p$, obtaining $P_p$; see again Figure~\ref{fig:dchain}.

\begin{figure}[htb]
  \centering
  \includegraphics[scale=0.55]{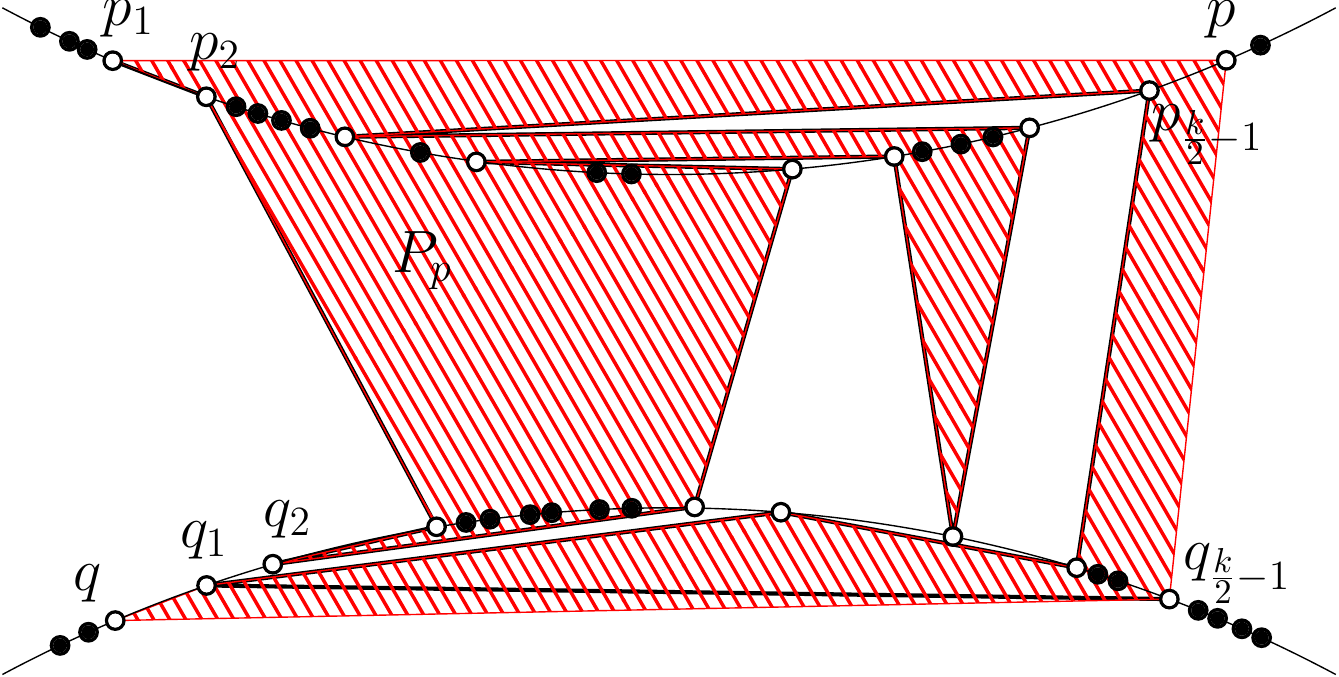} \hspace{0.6cm} 
  \includegraphics[scale=0.55]{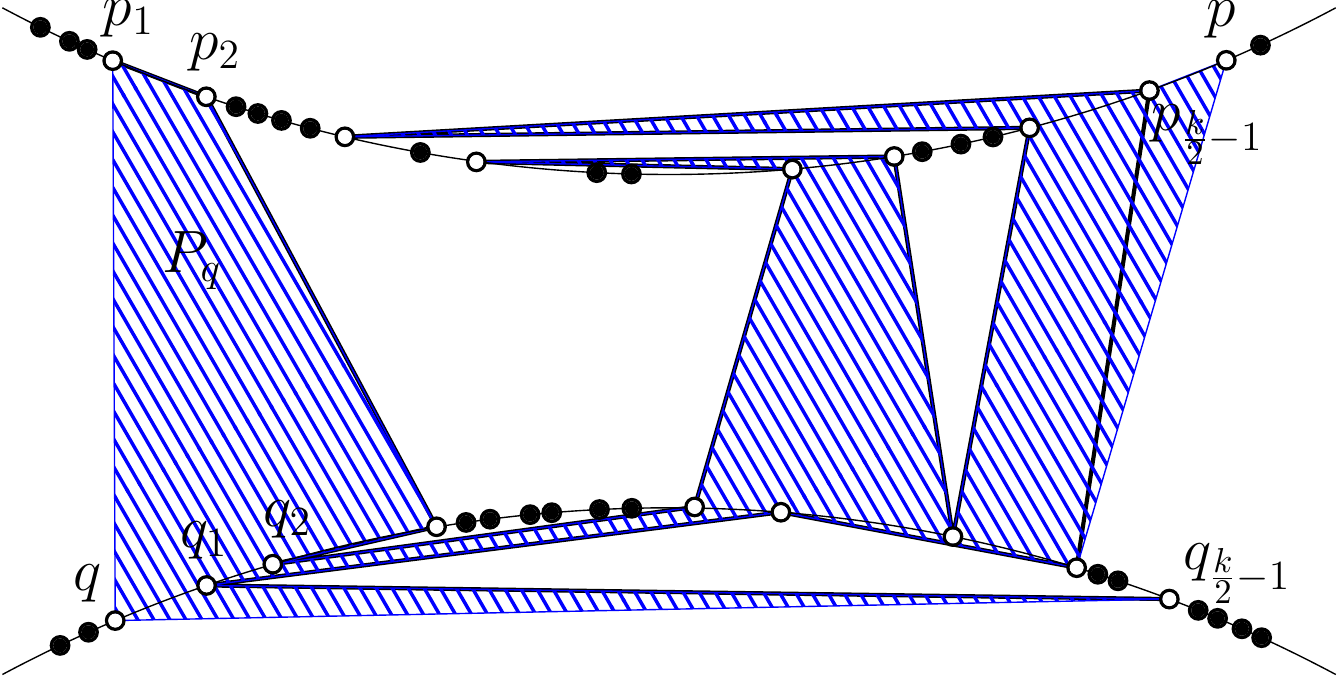} 
  \caption{Two ways to complete a path to a polygonization. }
  \label{fig:dchain}
\end{figure}

Note that this changes the number of 
polygonizations only by a constant factor 
and thus does not influence the asymptotic bound. 
However, the interior of $P_q$ is 
the exterior of its  ``complemented'' polygonization 
$P_p$, meaning that if we place a point somewhere on the double chain and it lies 
inside $P_q$, then it lies outside $P_p$, and vice versa. It follows that, in one of the two polygonizations, at least 
half of the $k+2$ positions to insert points are 
outside the polygonization. 
Hence, we can distribute the $\frac{n-k}{2}$ points on each chain 
to at least $\frac{k}{2}+1$ possible positions in total.
Now, on one of the two chains we have at least $\frac{k}{4} + 1$ positions; 
see again Figure~\ref{fig:dchain}.
More precisely, there are $\frac{k}{4}+j+1$ positions on this chain (where $0 \leq j < \frac{k}{4}$)
and  (at least)  $\max\{2,\frac{k}{4}-j\}$ positions on the other chain.
The lower bound stems from the fact that the positions before the first 
and after the last vertex of a chain are always possible.
Placing $a$ points on the $b$ positions of one chain can be seen as placing $a$ balls into $b$ boxes.
The number of ways to do so is $\binom{a+b-1}{a}$.
Using this, we obtain   
  \[\binom{\frac{n-k}{2}\!+\!\frac{k}{4}\!+\!j}{\frac{n-k}{2}} \cdot 
     \max\left\{ \binom{\frac{n-k}{2}\!+\!1}{\frac{n-k}{2}}, \binom{\frac{n-k}{2}\!+\!\frac{k}{4}\!-\!j\!-\!1}{\frac{n-k}{2}} \right\} 
  \] 
  possibilities to place the remaining points on the two chains. This factor is 
  minimized for $j = \frac{k}{4}-2$, which yields the claimed lower bound of
 \[\binom{\frac{n-4}{2}}{\frac{n-k}{2}} \cdot  \frac{n-k+2}{2} \cdot \Omega(4.64^k) \]
 for the number of \mbox{$k$-holes} of $DC(n)$.
\end{proof}

\section{An upper bound for the number of non-convex \mbox{$k$-holes}}
\label{sec:ub_nc_holes}
The following theorem shows that for sufficiently small $k$ with respect to $n$,
the maximum number of non-convex \mbox{$k$-holes} 
is smaller than the maximum number of convex \mbox{$k$-holes}.

\begin{theorem}
\label{thm:max_nc_holes}
  For any constant $k\geq 4$, the number of non-convex \mbox{$k$-holes} in a set of
  $n$ points is bounded by $O(n^{k-1})$ and there exist sets with
  $\Omega(n^{k-1})$ non-convex \mbox{$k$-holes}.
\end{theorem}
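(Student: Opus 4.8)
The plan is to prove the two halves separately. For the lower bound $\Omega(n^{k-1})$, I would again use the double chain $DC(n)$, mimicking the construction from the proof of Proposition~\ref{prop:kgons} but now being careful that the resulting $k$-gons are \emph{empty}. The idea: pick one vertex $u$ from the upper chain and $k-1$ vertices from the lower chain, and connect them into a non-convex $k$-hole whose only reflex vertex is $u$. Concretely, take $k-1$ \emph{consecutive} points $\ell_1, \ldots, \ell_{k-1}$ on the lower chain; the polygon $u\,\ell_1\,\ell_2\,\cdots\,\ell_{k-1}$ is simple and non-convex, and if the $\ell_i$ are consecutive there are no lower-chain points inside, while the concavity of the lower chain together with the placement of $u$ directly above ensures no upper-chain points are captured either. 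This gives roughly $\tfrac{n}{2}\cdot\tfrac{n}{2} = \Theta(n^2)$ holes, which is not enough; to reach $\Omega(n^{k-1})$ one needs the $k-1$ lower-chain points to range over \emph{all} $\binom{n/2}{k-1}$ subsets, not just the consecutive ones. That forces a more careful emptiness argument: a general (non-consecutive) choice $\ell_{i_1} < \cdots < \ell_{i_{k-1}}$ on the concave lower chain spans a polygon that may contain skipped lower-chain points. The fix is to route through the lower convex-position-like structure: any subset of points on a concave chain is itself in convex position, so the polygon $\ell_{i_1}\cdots\ell_{i_{k-1}}$ traversed in index order is convex and contains exactly the lower-chain points strictly "under" its upper boundary — zero of them, since a concave chain has no point below the segment joining two of its points on the correct side. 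Choosing $u$ just above so that the triangle fan from $u$ sweeps no upper point then yields a genuine $k$-hole, and we get $\tfrac{n}{2}\binom{n/2}{k-1} = \Omega(n^{k-1})$.

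For the upper bound $O(n^{k-1})$, I would reuse the machinery already developed in the proof of Theorem~\ref{thm:convex_k}. That proof shows that a non-convex $k$-hole $H$ has a reflex vertex $v_r$ whose deletion yields a simple non-empty $(k-1)$-gon, and — crucially — that $H$ is recovered from this $(k-1)$-gon by reinserting the single point lying on the inner geodesic between the two neighbors of $v_r$. Hence the number of non-convex $k$-holes is at most $(k-1)$ times the number of simple non-empty $(k-1)$-gons on $n$ points. Now I invoke the bound from that same proof: the number of simple (possibly non-empty) $(k-1)$-gons with a \emph{prescribed} triangle $\Delta$ of extreme vertices is $O(n^{k-4})$ per triangle, but summing over all $O(n^3)$ triangles that would only give $O(n^{k-1})$ if each $(k-1)$-gon is counted a bounded number of times — which it is, at most $\binom{k-1}{3}$ times. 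Actually the cleaner route: the total number of simple $(k-1)$-gons on $n$ points is trivially $O((k-1)!\,\binom{n}{k-1}) = O(n^{k-1})$ for constant $k$, since each of the $\binom{n}{k-1}$ vertex sets admits at most $(k-1)!$ cyclic orderings. Multiplying by the factor $k-1$ from the reflex-vertex reinsertion still gives $O(n^{k-1})$, completing the upper bound.

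The main obstacle is the emptiness verification in the lower-bound construction. It is tempting to wave hands and say "the lower chain is concave so subsets are in convex position and the region is empty," but one must pin down exactly where the apex $u$ sits and confirm that the $k-1$ triangles of the fan $u\ell_{i_1}, \ldots, u\ell_{i_{k-1}}$ jointly avoid every other point of $DC(n)$ — upper-chain points in particular can be subtle depending on how flat the chains are. The standard trick is to exploit the defining property of $DC(n)$ that the two chains are "sufficiently flat and far apart" so that any point of the upper chain sees the entire lower chain in convex position and vice versa; under that property a point $u$ on the upper chain together with any subset of the lower chain forms an empty polygon as long as $u$ is the unique reflex vertex, which holds by construction. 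I would cite the relevant structural property of $DC(n)$ from~\cite{GNT00} rather than re-derive it. A secondary (minor) obstacle is bookkeeping the parity/divisibility conditions ($n$, $k$ even vs.\ odd) in the binomial counts, but as in Theorem~\ref{thm:dchain} these only affect lower-order terms and can be absorbed into the $\Omega$ and $O$.
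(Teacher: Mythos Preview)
Your upper bound is fine and matches the paper's approach: bound the number of simple $(k{-}1)$-gons by $O(n^{k-1})$ and observe that each can be completed to a non-convex $k$-hole in at most $k{-}1$ ways via the inner geodesic.

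The lower bound, however, has a real gap. In the double chain $DC(n)$ the lower chain bulges \emph{upward} (this is what forces the convex hull of $DC(n)$ to have only four vertices and what makes the polygon $u\,\ell_{i_1}\cdots\ell_{i_{k-1}}$ non-convex, with the \emph{middle lower-chain} vertices reflex --- not $u$). Consequently, any skipped lower-chain point $\ell_m$ with $i_j<m<i_{j+1}$ lies \emph{above} the segment $\ell_{i_j}\ell_{i_{j+1}}$, which is precisely the interior side of the polygon. So the polygon is non-convex but \emph{not} empty. Your ``fix'' paragraph conflates two different regions: the convex $(k{-}1)$-gon on $\ell_{i_1},\dots,\ell_{i_{k-1}}$ alone (which indeed contains no chain points) and the $k$-gon obtained by replacing the edge $\ell_{i_1}\ell_{i_{k-1}}$ with the two edges to $u$ (whose interior is strictly larger and does contain the skipped points). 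There is a genuine tension here: on a single chain, the orientation that makes the apex-plus-chain polygon non-convex is exactly the orientation that traps the skipped points inside; flipping the curvature makes the polygon empty but convex. So the double chain does not give $\Omega(n^{k-1})$ non-convex $k$-holes by this argument, and restricting to consecutive lower-chain points yields only $\Theta(n^2)$.

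The paper does not use $DC(n)$ for this bound. It instead builds a dedicated configuration with four collinear-like groups of $\Theta(n)$ points each, and counts $k$-holes whose convex hull is a triangle: the three hull vertices come from three of the groups, a single reflex vertex is forced (one choice), and the remaining $k{-}4$ interior convex vertices are chosen freely from a fourth group lying along an edge of the triangle. This gives $\Omega\!\left(n^3\binom{n}{k-4}\right)=\Omega(n^{k-1})$, with emptiness guaranteed by the placement of the groups rather than by any chain argument.
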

\begin{proof}
  We first show that there are at most $O(n^{k-1})$ non-convex
  \mbox{$k$-holes}
  by giving an algorithmic approach to generate all non-convex \mbox{$k$-holes}. 
  We represent a non-convex \mbox{$k$-hole} by the
  counter-clockwise sequence of its vertices, where we require that
  the last vertex is reflex and its removal results in a simple $(k\!-\!1)$-gon; 
  see again the proof of Theorem~\ref{thm:convex_k}. 
  Any non-convex \mbox{$k$-hole} has $r \geq 1$ 
  such representations, where $r$ is at most the number of its reflex
  vertices. Thus the number of different representations is an upper
  bound on the number of non-convex \mbox{$k$-holes}.

  For $1\leq i \leq k\!-\!1$, we have $n\!-i\!+\!1$ possibilities to choose the $i$-th vertex $v_i$. 
  If the resulting $(k\!-\!1)$-gon is non-simple, we ignore it (but still count it).
  For the last vertex $v_k$, we have at most one possibility:
  As $v_k$ is required to be reflex and the polygon has to be
  empty, we have to use the inner geodesic connecting $v_{k-1}$ back
  to $v_1$. Only if this geodesic contains exactly one point,
  namely~$v_k$, we do obtain one non-convex \mbox{$k$-hole}.
  Altogether, we obtain at most
  $n(n-1)(n-2)\ldots(n-k+2)=n!/(n\!-\!k\!+\!1)! = O(n^{k-1})$ non-convex \mbox{$k$-holes}.

\begin{figure}[ht]
  \centering
  \includegraphics[scale=0.5]{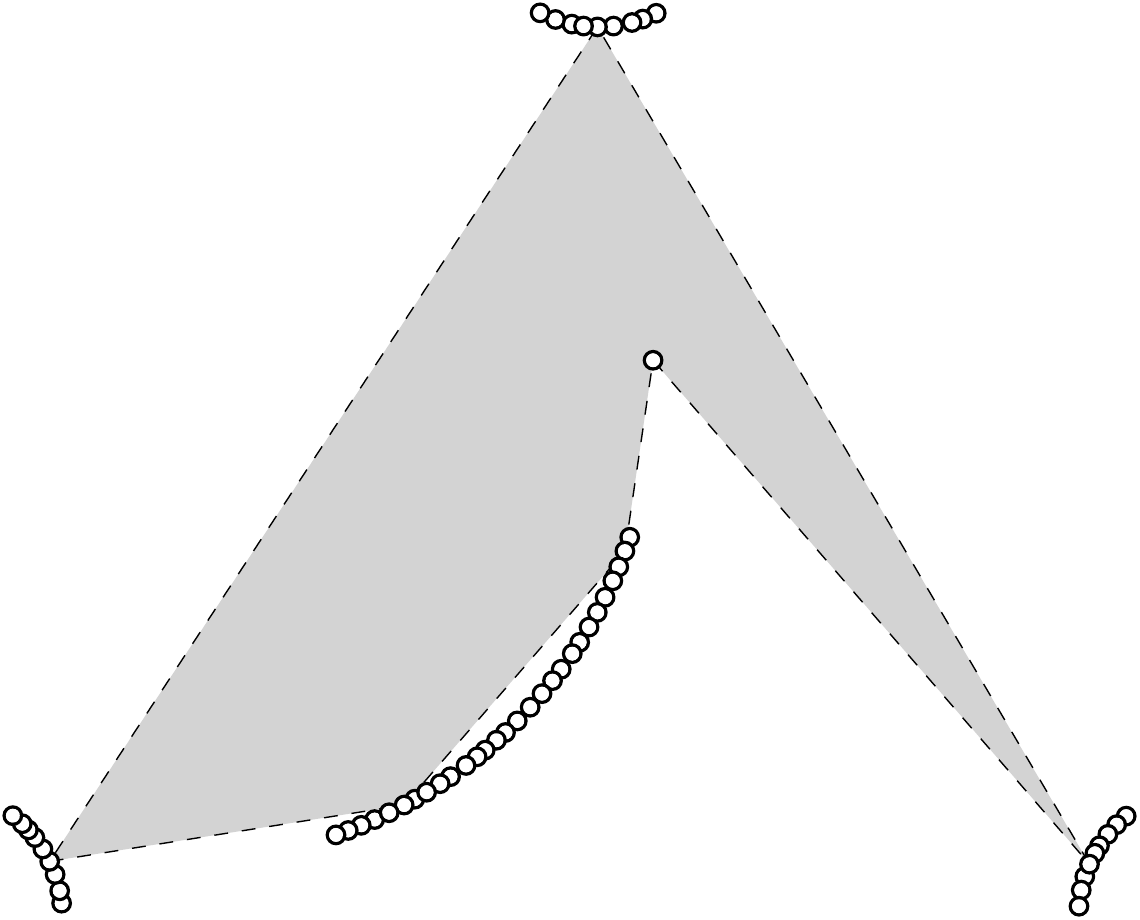}
  \caption{A set with $\Theta(n^{k-1})$ non-convex \mbox{$k$-holes}.}
  \label{fig:nonconv_k_holes}
\end{figure}

  For an example achieving this bound see Figure~\ref{fig:nonconv_k_holes}. 
  Each of the four indicated groups of points contains a linear fraction of the point set; for example $\frac{n}{4}$ points. 
  To show that in this example we have $\Omega(n^{k-1})$
  non-convex \mbox{$k$-holes} it is sufficient to only consider the \mbox{$k$-holes}
  with triangular convex hull of the type indicated in the figure.
  For each of the three vertices of the convex hull of the \mbox{$k$-hole} we
  have a linear number of possible choices, and the $k-4$ non-reflex
  inner vertices can also be chosen from a linear number of
  vertices. 
  Hence, we obtain 
  $\Omega\left(n^3 \cdot \binom{n}{k-4}\right) = \Omega(n^{k-1})$
  non-convex \mbox{$k$-holes}.
\end{proof}

\section{On the minimum number of (general) \mbox{$k$-holes}}
\label{sec:min_gen_holes}

We start with an upper bound on the minimum (over all $n$-point sets) number of (general) \mbox{$k$-holes}.
Note that the minimum number of (general) \mbox{$k$-holes} cannot be greater than 
the minimum number of convex \mbox{$k$-holes} 
plus the maximum number of non-convex \mbox{$k$-holes}.
Recall that the minimum number of convex \mbox{$k$-holes} is $O(n^2)$ for $k \leq 6$ (and zero for $k \geq 7$), and 
the maximum number of non-convex \mbox{$k$-holes} is $O(n^{k-1})$. 
For $k \geq 3$, the latter dominates the former, yielding an upper bound of $O(n^{k-1})$
for the minimum number of general \mbox{$k$-holes}. 
But this bound is by far not tight, as the following considerations show.\\[-1ex]

\subsection{An upper bound on the minimum number of (general) \mbox{$k$-holes} using grids}
\label{sec:min_ub_horton}

Consider an integer grid $G$ of size $\sqrt{n} \times \sqrt{n}$. 
We denote a segment spanned by two points of~$G$ that does not have any points of~$G$ 
in its interior as \emph{prime segment}. 
Further, we denote the \emph{slope} of a line~$l$ spanned by points of $G$ as the differences $(d_x,d_y)$ 
of the coordinates of the endpoints of a prime segment on $l$.
Note that a line with slope $(0,1)$ or $(1,0)$ contains exactly $\sqrt{n}$ points 
of~$G$. A line with slope $(d_x,d_y)$, both $d_x,d_y\neq 0$,  
contains at most $\min \left\{\LC\frac{\sqrt{n}}{|d_x|}\RC,\LC\frac{\sqrt{n}}{|d_y|}\RC \right\}$ 
points of~$G$.
A $k$-gon spanned by points of $G$ is called \emph{interior-empty} if it does not contain any points of $G$ in its interior.

\begin{lemma}
\label{lem:kh_ub_grid}
In an integer grid $G$ of size $\sqrt{n} \times \sqrt{n}$, 
every segment is incident to at most $O(\sqrt{n} \log n)$ interior-empty triangles.
\end{lemma}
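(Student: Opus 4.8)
The plan is to fix a prime segment $s$ of $G$ (it suffices to bound the triangles incident to prime segments, since any segment spanned by two grid points decomposes into prime segments, and only the prime segment containing the relevant grid edge matters for incidence—but more carefully, I will just directly bound triangles incident to an arbitrary grid segment $s=pq$) and count the interior-empty triangles $pqr$. The key observation is that for $pqr$ to be interior-empty, the point $r$ must be ``visible'' from the segment $s$ in a strong sense: the triangle $pqr$ contains no grid point in its interior. I would classify the candidate apexes $r$ by the line $\ell$ through $r$ parallel to $s$: all apexes on a common such line $\ell$ give triangles with the same base $s$ and the same height, so among the grid points of $G$ lying on $\ell$, only the ones whose triangle with $s$ is empty survive, and I claim there are only $O(1)$ of those per line once we also control how many lines can contribute.

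**Bounding apexes per parallel line and counting lines.** Fix the direction of $s$; let $\ell_1,\ell_2,\dots$ be the successive grid lines parallel to $s$, ordered by distance from $s$, on one side (the other side is symmetric). If $\ell_t$ is the $t$-th such line, then the strip between $s$ and $\ell_t$ contains roughly $t$ parallel prime segments' worth of lattice width; a standard lattice argument (Pick's theorem, or counting lattice points in the triangle $pqr$ with $r\in\ell_t$) shows that a triangle with base $s$ and apex on $\ell_t$ contains $\Omega(t)$ interior lattice points of the full lattice $\mathbb Z^2$ unless $t=O(1)$. However, the relevant count is lattice points of $G$, i.e.\ the truncated grid, so a triangle reaching a far-away line can still be $G$-empty if it pokes outside $G$'s bounding box—this is exactly why the bound is $O(\sqrt n\log n)$ and not $O(1)$. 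So I split into two regimes: triangles that stay ``local'' (apex within lattice distance $O(1)$ of $s$ in the parallel direction), contributing $O(\sqrt n)$ apexes total since each of $O(1)$ parallel lines carries $O(\sqrt n)$ grid points but only $O(\sqrt n)$ of those total can be empty—and triangles that are ``long and thin'' reaching toward the boundary.

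**The long-thin triangles and the $\log n$ factor.** For the long, thin triangles the empty ones are controlled as follows: if the apex $r$ is far from $s$ in the parallel direction, emptiness forces $r$ to lie near the boundary of $G$, and moreover the triangle $pqr$ must be ``skinny'' — the width of $G$ in the direction perpendicular to $pr$ (or $qr$) must be small, which by the slope-vs-point-count estimate stated just before the lemma happens only for $O(\log n)$ essentially different slopes (summing $\sum_{d}\sqrt n/d$-type bounds, or equivalently the number of visible directions with few points is logarithmic). For each such skinny direction there are $O(\sqrt n)$ choices of $r$ along the relevant boundary edge, but emptiness again prunes this: I expect to show $O(\sqrt n)$ of them survive across all skinny directions, or at worst $O(\sqrt n)$ per $O(\log n)$ direction classes, giving $O(\sqrt n\log n)$ in total.

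**Main obstacle.** The hard part will be the boundary/long-thin case: near the edges of the $\sqrt n\times\sqrt n$ grid, the ``emptiness'' constraint is much weaker than in the interior, and I must carefully combine (i) the bound on how many parallel grid lines can lie between $s$ and a $G$-empty apex, (ii) the bound on how many slopes support long skinny empty triangles, and (iii) the fact that each relevant line still contains up to $\sqrt n$ grid points, without double-counting, to land exactly on $O(\sqrt n\log n)$ rather than something larger like $O(n)$. The interior case is essentially Pick's theorem and is routine; I expect the write-up to spend most of its effort partitioning the apex region into $O(\log n)$ slope classes and showing each contributes $O(\sqrt n)$ empty triangles.
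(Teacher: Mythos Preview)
Your proposal has two errors that would derail it. First, the ``boundary obstacle'' you flag is illusory: since $G=\mathbb{Z}^2\cap[0,\sqrt n-1]^2$ and any triangle with vertices in $G$ lies inside $\operatorname{conv}(G)=[0,\sqrt n-1]^2$, every $\mathbb{Z}^2$-point in its interior is already a point of $G$. Thus $G$-interior-emptiness and $\mathbb{Z}^2$-interior-emptiness coincide; no triangle ``pokes outside'' the box, and your long-thin/boundary regime simply does not exist. Second, and more seriously, your Pick-theorem claim is false: it is not true that every apex on $\ell_t$ forces $\Omega(t)$ interior lattice points once $t$ is large. Take horizontal prime $pq$ with $p=(0,0)$, $q=(1,0)$, and $r=(0,t)$: the triangle has area $t/2$, but the $t-1$ ``extra'' lattice points all sit on the edge $pr$, so the interior is empty. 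A correct Pick computation shows that for prime $pq$ there are exactly two residue classes $a\equiv 0,1\pmod t$ for which the apex $(a,t)$ gives an interior-empty triangle, hence $\Theta(\sqrt n/t)$ such apexes on $\ell_t\cap G$. The $\log n$ factor then arises from the harmonic sum $\sum_{t=1}^{\sqrt n}\sqrt n/t$, not from any boundary phenomenon.

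So your parallel-line decomposition is salvageable once the per-line count is fixed, but the paper argues differently and avoids Pick's theorem altogether. It does not stratify by the distance $t$; instead it observes that if the apex $r$ lies beyond the two nearest parallel grid lines $l',l''$, then both $pr$ and $qr$ must cross $l'$ (say) within the same prime segment $s$ of $l'$, and emptiness forces $r$ onto a line through $p$ (or through $q$) and an endpoint of $s$. For horizontal $pq$ these lines have slopes $(i,1)$ with $|i|\le\sqrt n$, each containing at most $\lceil\sqrt n/|i|\rceil$ points of $G$, and summing yields $O(\sqrt n\log n)$ directly.
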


\begin{proof}
Consider an arbitrary segment $pq$ spanned by points of~$G$ 
(and possibly containing some points of~$G$ in its interior),
and its supporting line $l$. 
Let $l'$ and $l''$ be the two lines parallel to $l$ and going through 
points of~$G$ such that no point of~$G$ lies between $l$ and $l'$,
and between $l$ and $l''$, respectively; see Figure~\ref{fig:kh_reg_grid} 
where the segment $pq$ is a prime segment.

\begin{figure}[ht]
  \centering
  \includegraphics[page=4, scale=1]{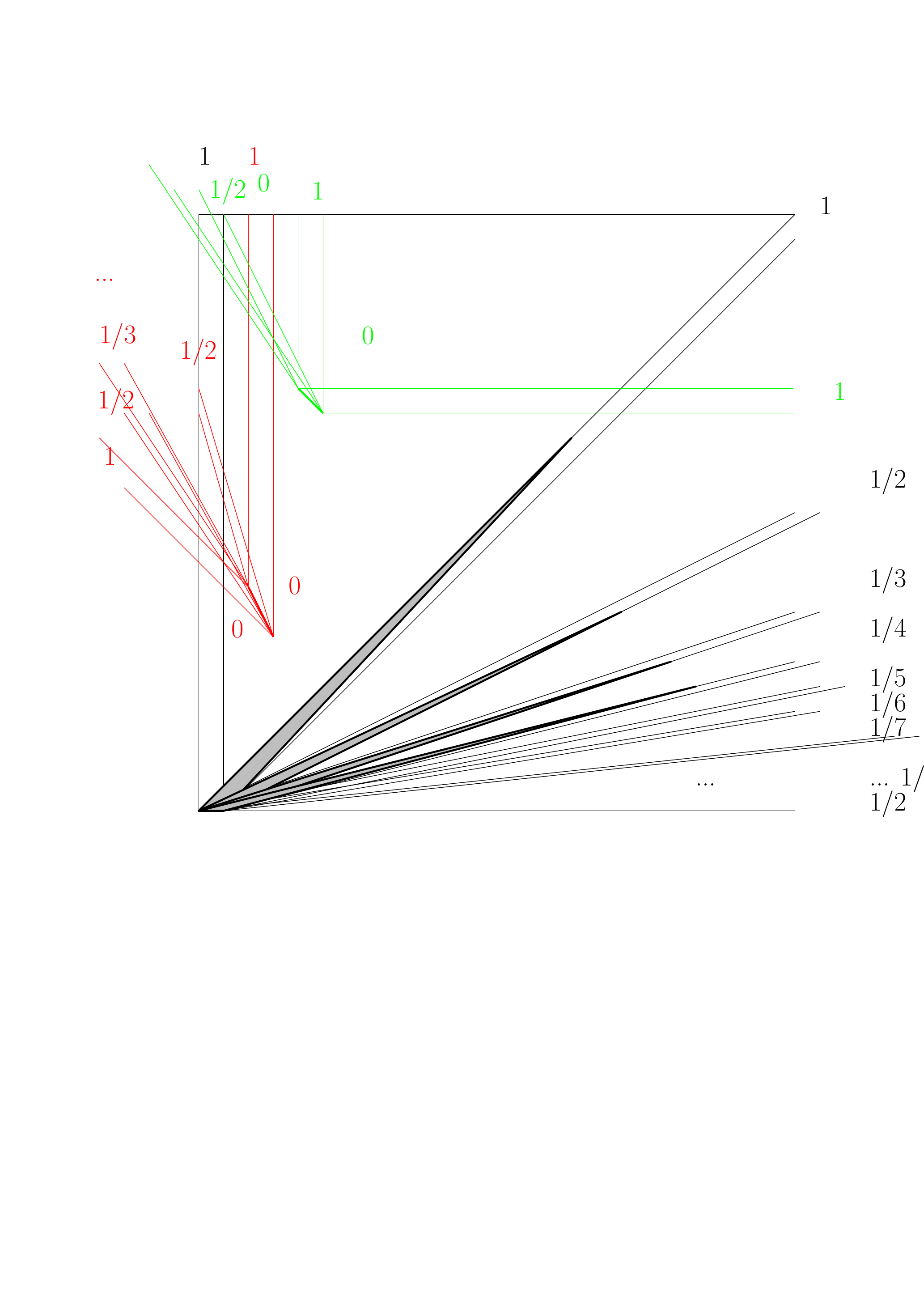} 
  \caption{A prime segment $pq$ in a $9\times 9$ integer grid with the according lines $l$, $l'$, and $l''$.
			Gray triangles are interior-empty and have the third point outside the strip $l'l''$.}
  \label{fig:kh_reg_grid}
\end{figure}

Both, $l'$ and $l''$, 
contain at most $\sqrt{n}$ points of~$G$, each spanning 
an interior-empty
triangle with $pq$. Further, each of the points
on $l$ spans a degenerate interior-empty triangle with $pq$. 
Any other triangle $\Delta$ incident to $pq$ has its third point $r$ 
strictly outside the strip bounded by $l'$ and~$l''$.

A necessary condition for such a triangle $\Delta$ to be interior-empty is that both, $pr$ and~$qr$,
`pass through' the same prime segment $s$ on $l'$ (or $l''$). 
Moreover, at least one of the supporting lines of~$pr$ and $qr$ 
contains an endpoint of this prime segment~$s$. 
To see this, w.l.o.g., let $s$ be a prime segment on $l'$. 
Further, let $l_p$ be the line through~$p$ and one endpoint of~$s$ and $l_q$ be the line through $q$ and the other endpoint of~$s$ such that $l_p$ and $l_q$ do not cross between $l$ and $l'$. 
See Figure~\ref{fig:kh_reg_grid_lp_lq}.

\begin{figure}[htb]
  \centering
	  \includegraphics[page=6, scale=1]{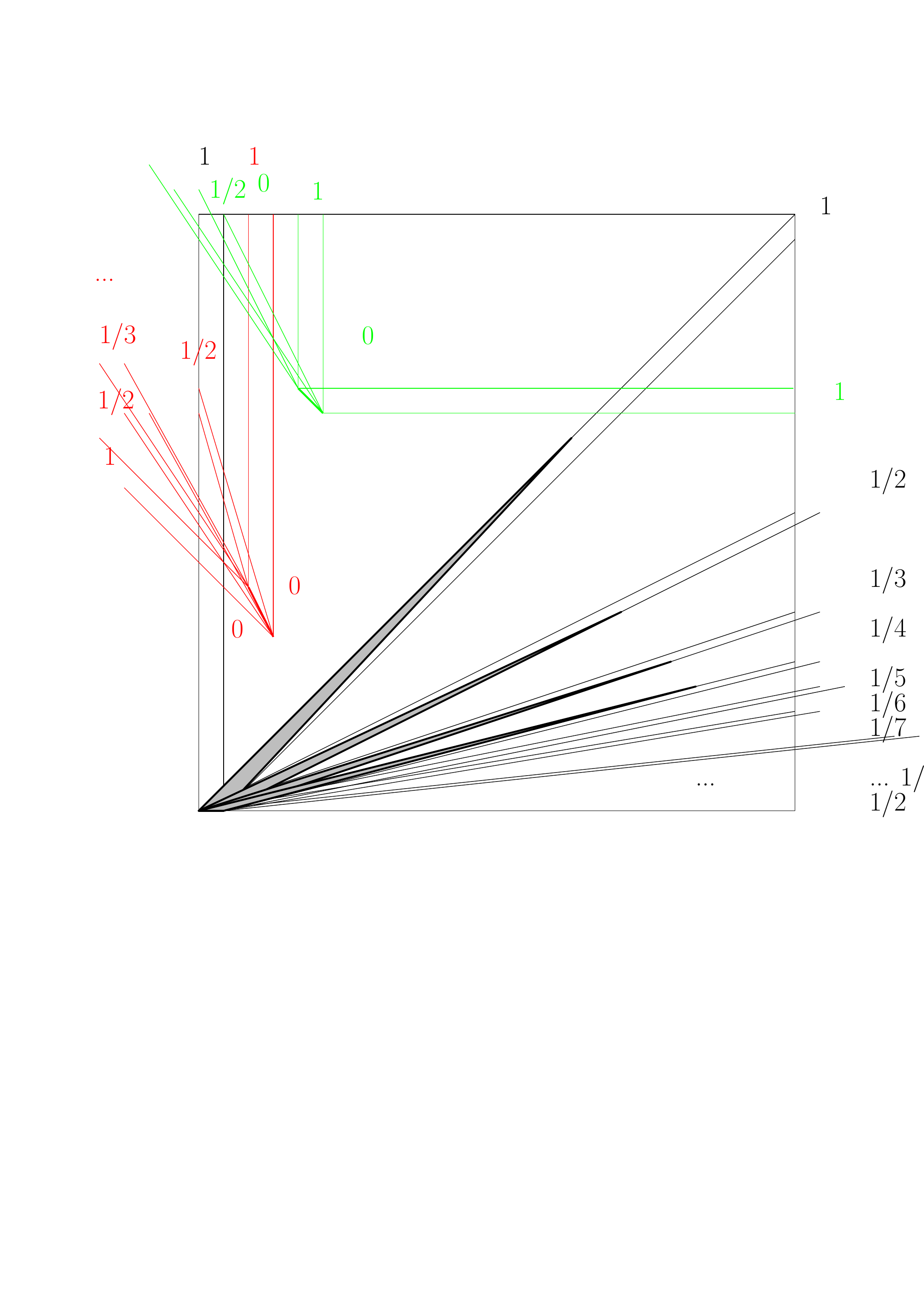} \hspace{2cm}
	  \includegraphics[page=7, scale=1]{grid_empty_prime_triangles} 
  \caption{Lines $l_p$ and $l_q$ for a non-prime segment $pq$ (left) or a prime segment $pq$ (right).}
  \label{fig:kh_reg_grid_lp_lq}
\end{figure}

Consider the region $A$ that contains $s$ and is bounded by $l$, $l_p$, and $l_q$. 
Note that $A$ is a half-strip, if $pq$ is a prime segment, or triangular, otherwise; see Figure~\ref{fig:kh_reg_grid_lp_lq} again.
Moreover, as $s$ is a prime segment, the interior of $A$ does not contain any point of~$G$.
Thus, any point seen from both $p$ and $q$ through $s$ lies on the boundary of $A$, more precisely,  on $l_p$ or $l_q$.

Using this latter property, we can derive an upper bound on 
the number of points $r$ that are visible from $p$ and $q$ 
via the same prime segment~$s$ by counting the number of 
points on such supporting lines. 

\enlargethispage{3ex}
Consider first the case that $pq$ is a horizontal segment, i.e., $q-p=(d_x,0)$. 
Then the according lines through $p$ (or $q$) 
and a point of a prime segment on $l'$ (or $l''$) have slopes in the range
$\{(0,1), (\pm1,1), (\pm2,1), \ldots, (\pm\sqrt{n},1)\}$.
Assuming that all these lines in fact exist for $pq$ in $G$,
we obtain the following  upper bound 
for the total number of interior-empty triangles incident to $pq$
(the first term $3\sqrt{n}$ is for triangles having the third point on one of $l$, $l'$, and~$l''$).
\begin{equation}
\label{eqn:grid}
\begin{array}{rcl}
\displaystyle 3\sqrt{n} + 2\cdot\sum_{i=-\sqrt{n}}^{-1} \LC\frac{\sqrt{n}}{|i|}\RC + 2\cdot\sqrt{n}
+ 2\cdot\sum_{i=1}^{\sqrt{n}} \LC\frac{\sqrt{n}}{|i|}\RC
& = &    \displaystyle 	 O(\sqrt{n}\log(n))
\end{array}
\end{equation}

For the general case of $pq$ being a segment with $q-p=(d_x,d_y)$,
its supporting line $l$ has slope 
$(d'_x,d'_y)=(\frac{d_x}{\gcd(d_x,d_y)},\frac{d_y}{\gcd(d_x,d_y)})$.
The according slopes of the lines through $p$ (or $q$) 
and a point of a prime segment on $l'$ (or $l''$)
differ from each other by a multiple of $(d'_x,d'_y)$. 
Note that $d'_x$ and $d'_y$ are integers with $\max\{d'_x,d'_y\} \geq 1$. Thus, 
the according number of interior-empty triangles for a general segment cannot exceed 
the bound in~(\ref{eqn:grid}) for a horizontal segment, which completes the proof.
\end{proof}

\begin{theorem}
\label{thm:ub_min_gen}
For every constant $k \geq 4$ and every number \mbox{$n \geq k$} with $\sqrt{n} \in \mathbb{N}$,
 there exist sets with $n$ points in general 
position that admit at most $O(n^{\frac{k+1}{2}}(\log n)^{k-3})$ \mbox{$k$-holes}.
\end{theorem}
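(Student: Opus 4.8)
The plan is to take $S$ to be a $\sqrt n\times\sqrt n$ integer grid $G$ perturbed into general position by an arbitrarily small, once‑and‑for‑all fixed motion. If the perturbation is small enough, then every \mbox{$k$-hole} of $S$ corresponds to an interior‑empty \mbox{$k$-gon} of $G$ spanned by the same point set (possibly with collinear triples among its vertices): a grid point strictly inside the corresponding grid polygon would still lie inside after perturbing, contradicting emptiness. Distinct \mbox{$k$-holes} yield distinct such grid polygons, so it suffices to upper bound the number of interior‑empty \mbox{$k$-gons} of $G$.

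For this I would use triangulations. Fix any triangulation of an interior‑empty \mbox{$k$-gon} $\widetilde H$ using only its vertices; it has $k-2$ triangles and $k-3$ diagonals, each triangle being interior‑empty. Since the interior of $\widetilde H$ contains no grid point, the relative interior of every diagonal contains none either, so every diagonal is a \emph{prime} segment. Rooting the dual tree at a leaf, one can rebuild $\widetilde H$ by choosing one triangle and then attaching the remaining $k-3$ triangles one at a time, each along an already‑fixed (prime) edge and on a prescribed side; by Lemma~\ref{lem:kh_ub_grid} each attachment offers only $O(\sqrt n\log n)$ choices for the new apex. Up to a factor depending only on $k$ (for the combinatorial type of the triangulation), this bounds the number of interior‑empty \mbox{$k$-gons} by $O(\sqrt n\log n)^{k-3}$ times the number of admissible starting triangles.

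The delicate point, which I expect to be the main obstacle, is that $G$ has $\Theta(n^2\log n)$ interior‑empty triangles, so using all of them as starting triangles loses one logarithmic factor. To recover it I would not pay for a whole triangle: either refine Lemma~\ref{lem:kh_ub_grid} so that a prime diagonal whose slope has larger entries carries proportionally fewer interior‑empty triangles and telescope the resulting sum over prime diagonals, making the starting triangle effectively cost only $O(n^2)$; or, following the reflex‑vertex encoding from the proof of Theorem~\ref{thm:max_nc_holes}, record a non‑convex \mbox{$k$-hole} as a simple \mbox{$(k\!-\!1)$-gon} $v_1,\dots,v_{k-1}$ together with the reflex vertex $v_k$, which is \emph{determined} (it is the one grid point, if any, lying on the inner geodesic from $v_{k-1}$ back to $v_1$), and bound the admissible \mbox{$(k\!-\!1)$-gons} by the triangulation/attachment count above, now with only two freely chosen vertices (an $O(n^2)$ segment) and $k-3$ attached ones. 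Either route gives $O(n^2)\cdot O(\sqrt n\log n)^{k-3}=O(n^{(k+1)/2}(\log n)^{k-3})$; convex \mbox{$k$-holes}, counted separately through their prime diagonals, remain well below this. What still needs care is purely combinatorial: verifying that every non‑convex \mbox{$k$-hole} admits an encoding and a triangulation compatible with these counts — the ear and reflex‑vertex structure already used in Theorems~\ref{thm:convex_k} and~\ref{thm:max_nc_holes} is exactly what is required — and handling the smallest values of $k$ directly.
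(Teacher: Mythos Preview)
Your triangulation-dual framework and the use of Lemma~\ref{lem:kh_ub_grid} for the per-edge attachment bound are exactly what the paper does. The gap is in how you handle the starting triangle. You work entirely in the grid $G$, correctly observe that $G$ has $\Theta(n^2\log n)$ interior-empty triangles, and then propose two speculative workarounds to shave the extra $\log n$.

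The paper avoids this obstacle altogether. It does not count interior-empty $k$-gons of $G$; it counts $k$-holes directly in $S$, and it takes $S$ to be the \emph{squared Horton set}~\cite{Va92} --- a specific perturbation of the grid in which each originally collinear set becomes a Horton set. The point of this choice is that, by B\'ar\'any and Valtr~\cite{BV2004}, the squared Horton set has only $O(n^2)$ empty triangles. Lemma~\ref{lem:kh_ub_grid} is invoked only to bound, for each segment $pq$ of $S$, the number of empty triangles of $S$ on $pq$ (each such triangle corresponds to an interior-empty triangle of $G$ on the corresponding grid segment). The root triangle of the dual tree is then chosen among the $O(n^2)$ empty triangles of $S$, giving $O(n^2)\cdot O(\sqrt n\,\log n)^{k-3}$ directly, with no logarithmic loss to recover.

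Neither of your proposed workarounds is needed, and neither is clearly correct as stated: in route (a) it is not apparent what ``telescoping over prime diagonals'' means for the root of the dual tree, which is a triangle rather than an edge; in route (b) the $(k{-}1)$-gon obtained by deleting the reflex vertex is \emph{non-empty} (it contains $v_k$), so the empty-triangle attachment count does not apply to it without further argument. The missing idea is simply to choose the perturbation to be the squared Horton set and quote the $O(n^2)$ empty-triangle bound from~\cite{BV2004}.
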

\begin{proof}
The point set $S$ we consider is the squared Horton set 
of size $\sqrt{n} \times \sqrt{n}$; see~\cite{Va92}.
Roughly speaking, $S$ is a grid which is perturbed such that 
every set of originally collinear points forms a Horton set.
For any two points $p,q \in S$, the number of empty triangles 
that have the segment $pq$ as an edge is bound from above 
by the number of (possibly degenerate) interior-empty triangles 
incident to the according segment in the regular grid.
By Lemma~\ref{lem:kh_ub_grid}, this latter number is at most $O(\sqrt{n}\log n)$.

To estimate the number of \mbox{$k$-holes} in $S$, we use triangulations and their dual: 
In the dual graph of a triangulation, every triangle is 
represented as a node, and two nodes are connected iff the corresponding triangles share an edge. 
For the triangulation of a \mbox{$k$-hole}, this gives a 
binary tree which can be rooted at any triangle that has an edge on the boundary of the \mbox{$k$-hole}.
It is well-known that there are $O(4^k \cdot k^{-\frac{3}{2}})$ such rooted binary trees~\cite{TR10}.
Although exponential in $k$, this bound is constant 
with respect to the size $n$ of~$S$.

Now pick an empty triangle $\Delta$ in $S$ and an arbitrary rooted binary tree~$B$ of size $k\!-\!2$. 
Consider all \mbox{$k$-holes} which 
admit a triangulation whose dual is $B$~rooted at~$\Delta$. 
Note that each such $k$-hole contains~$\Delta$.
We ``build'' and count (triangulations of) these $k$-holes by starting from $\Delta$ and following the edges of~$B$.
As by Lemma~\ref{lem:kh_ub_grid}, the number of empty triangles incident to an edge in~$S$ is $O(\sqrt{n}\log n)$, 
each of the $k\!-\!3$ edges in~$B$ yields $O(\sqrt{n}\log n)$ 
possibilities to continue a triangulated \mbox{$k$-hole}, 
and we obtain an upper bound of 
$O((\sqrt{n}\log n)^{k-3})$ 
for the number of triangulations of \mbox{$k$-holes} for~$\Delta$ that represent~$B$.

Multiplying this by the (constant) number of rooted binary trees of size $k-2$ 
does not change the asymptotics and thus yields an upper bound of 
$O((\sqrt{n}\log n)^{k-3})$ 
for the number of all triangulations of all \mbox{$k$-holes} containing~$\Delta$.
As any \mbox{$k$-hole} can be triangulated, this is also an upper bound for the 
number of \mbox{$k$-holes} containing~$\Delta$.

Finally, as there are  $O(n^2)$ empty triangles in $S$ (see~\cite{BV2004}),
we obtain  $O(n^2(\sqrt{n}\log n)^{k-3}) =  \linebreak O(n^{\frac{k+1}{2}}(\log n)^{k-3})$ 
as an upper bound for the number of \mbox{$k$-holes} in $S$.
\end{proof}

\subsection{A lower bound on the minimum number of (general) \mbox{$k$-holes}}
\label{sec:min_lb_gen}
Every set of $k$ points admits at least one polygonization.
Combining this obvious fact with Theorem~6 in~\cite{4HOLES_CGTA}, we obtain the following result.

\begin{theorem}
\label{thm:min}
Let $S$ be a set of $n$ points in the plane in general
position. 
For every $c < 1$ and every $k \leq c \cdot n$, 
$S$ contains $\Omega(n^2)$ \mbox{$k$-holes}.
\end{theorem}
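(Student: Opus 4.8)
The plan is to leverage Theorem~6 of~\cite{4HOLES_CGTA}, which (as used in Section~\ref{sec:min_gen_holes}) guarantees that any set of $n$ points in general position contains $\Omega(n^2)$ empty convex polygons of \emph{some} bounded size, or more precisely a quadratic number of vertex-disjoint-enough substructures on which one can build $k$-holes. First I would recall the precise statement we borrow: there exists a constant such that $S$ contains $\Omega(n^2)$ ``small'' empty configurations, each consisting of a constant number of points, from which larger holes can be assembled. The key observation is that a $k$-hole need not be convex, so we have much more freedom than in the classical $k$-hole question; in particular, the trivial fact that \emph{every} set of exactly $k$ points admits at least one polygonization (a simple spanning cycle) means that any empty set of $k$ points automatically yields a $k$-hole.

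So the main line of argument is: from Theorem~6 in~\cite{4HOLES_CGTA} extract $\Omega(n^2)$ distinct empty substructures, and then extend or complete each one to an empty set of $k$ points whose polygonization is a $k$-hole. Concretely, I would take a quadratic family of empty triangles (or empty quadrilaterals) provided by that theorem, and for each one greedily add $k-3$ (resp.\ $k-4$) further points lying inside it or in an empty region adjacent to it, keeping the union empty; since $k \leq c\cdot n$ with $c<1$, there is always enough ``room'' and enough points available to do this. Once we have an empty $k$-point set, invoke the polygonization fact to get a $k$-hole. The counting step is then to argue that distinct base configurations give rise to distinct $k$-holes (or that each $k$-hole is counted only $O(1)$ times), so the quadratic lower bound on base configurations descends to a quadratic lower bound on $k$-holes.

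The step I expect to be the main obstacle is making the ``complete an empty triangle to an empty $k$-set'' move both \emph{always possible} (for all $k$ up to $cn$, uniformly) and \emph{injective enough} for the counting. One has to ensure that the quadratic family of empty base triangles can each be augmented — this may require choosing the base configurations from a sub-family with a controlled local point count, or picking, for each empty triangle, a canonical nearby cluster of points guaranteed to be in general position with the triangle's vertices. The cleanest route is probably to have Theorem~6 of~\cite{4HOLES_CGTA} already deliver empty configurations on $k$ points directly (if its statement is general enough in $k$), in which case the proof collapses to: apply Theorem~6 to get $\Omega(n^2)$ empty $k$-point sets, polygonize each, done. I would first check whether Theorem~6 is stated for general $k$ or only for a fixed small size; if the former, the proof is essentially the one-line argument sketched at the end of the excerpt, and the only real content is the observation that a polygonization of an empty point set is a $k$-hole.
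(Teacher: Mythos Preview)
Your final instinct is the right one: the proof really is a one-liner once you know that Theorem~6 of~\cite{4HOLES_CGTA} (more precisely, its proof technique) delivers $\Omega(n^2)$ distinct $k$-islands directly, for any $k\le cn$. The paper does exactly this. It does \emph{not} cite Theorem~6 as a black box about triangles or quadrilaterals and then try to grow them; it simply reruns the same construction with parameter~$k$: sort $S$ by $x$-coordinate as $p_1,\dots,p_n$, and for each pair $(i,j)$ with $j-i\ge k-1$ take $p_i$, $p_j$, and the $k-2$ points of $\{p_{i+1},\dots,p_{j-1}\}$ closest to the line through $p_ip_j$. This $k$-tuple is an island (its convex hull lies in a thin slab around $p_ip_j$ inside the vertical strip $[x(p_i),x(p_j)]$, and every unchosen point of $S$ is outside that region), so any polygonization of it is a $k$-hole. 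Distinctness is free because $p_i$ and $p_j$ are the leftmost and rightmost vertices of the hole, so the hole determines $(i,j)$.

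By contrast, the main body of your plan --- start from a quadratic family of empty triangles and ``greedily add $k-3$ further points lying inside it or in an empty region adjacent to it'' --- has a real gap. An empty triangle contains no points of $S$ in its interior, so there is nothing to add from inside; and once you add points from outside you have no control over whether the convex hull of the enlarged set stays empty, nor any mechanism to make the map from base triangle to resulting $k$-hole injective (many base triangles could sit inside the same eventual $k$-island). The whole point of the $x$-sorted construction is that it builds the $k$-island and the injectivity witness $(p_i,p_j)$ simultaneously; your extension scheme separates these and then cannot recover either property. So: drop the triangle-growing route, and instead carry out the construction above --- that is the entire proof.
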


\begin{proof}
 We follow the lines of the proof of Theorem~6 in~\cite{4HOLES_CGTA}.
 Consider the point set $S$ in $x$-sorted order, \mbox{$S=\{p_1, \ldots,
 p_n\}$}, and sets $S_{i,j}=\{p_i, \ldots, p_j\} \subseteq S$.
 The number of sets $S_{i,j}$ of cardinality at least $k$ is
 \[\sum_{i=1}^{n-k+1} \sum_{j=i+k-1}^{n} 1 \ = \ \frac{(n-k+1)(n-k+2)}{2} \ = \ \Theta(n^2).\]
 For each $S_{i,j}$ use the $k-2$ points of $S_{i,j} \backslash \{ p_i, p_j\} $ which are
 closest to the segment $p_ip_j$ to obtain a subset of $k$ points 
 including $p_i$ and $p_j$. 
 Each such set contains at least one \mbox{$k$-hole} which has $p_i$ and $p_j$
 among its vertices.
 Moreover, as $p_i$ and $p_j$ are the left and rightmost points of $S_{i,j}$, they are 
 also the left and rightmost points of 
 this \mbox{$k$-hole}. This implies that any \mbox{$k$-hole} of $S$ can count for at most one set $S_{i,j}$, 
 which gives a lower bound of $\Omega(n^2)$ for the number of \mbox{$k$-holes} in $S$.
\end{proof}

A notion similar to $k$-holes is that of islands. 
An island $I$ is a subset of $S$, not containing points
of $S\setminus I$ in its convex hull. A $k$-island
is an island of $k$ elements. For example, 
any two points form a $2$-island, and any 3 points spanning an empty triangle 
are a $3$-island of $S$. In particular, convex $k$-holes are also $k$-islands 
while non-convex $k$-holes need not be $k$-islands. In general, any $k$-tuple spans 
at most one $k$-island, while it might span many $k$-holes.
In \cite{RuyClemensIslands12} it was shown that the number of $k$-islands of $S$ is 
always $\Omega(n^2)$ and that the Horton set of $n$ points contains only $O(n^2)$ 
$k$-islands (for any constant $k \geq 4$ and sufficiently large $n$). 

In contrast to this tight quadratic lower bound on the number of islands, 
all (families of) point sets we considered so far contain a super-quadratic 
number of $k$-holes for any constant $k \geq 4$.
For example, the Horton set of $n$ points contains $\Omega(n^3)$ such holes.
Also, at first glance, the result from Theorem~\ref{thm:min} seems easy to improve.
However, note that a general 
super-quadratic lower bound for the number of \mbox{k-holes} for any constant $k$ would
solve a conjecture of B{\'a}r{\'a}ny in the affirmative,
showing that every point set contains a segment that spans a super-constant number of 
\mbox{3-holes}; see e.g.~\cite[Chapter~8.4, Problem~4]{BMP05}. 
This might also be a first step towards proving a quadratic lower bound for the number of convex \mbox{5-holes}.
So far, not even a super-linear bound is known for the latter 
problem~\cite[Chapter~8.4, Problem~5]{BMP05}.  

The results in Section~\ref{sec:min_lb_grid} imply that if there exist point sets 
with only a quadratic 
number of  \mbox{4-holes}, then they cannot have a grid-like structure. 
In particular, the currently best known configuration for minimizing the numbers of convex $k$-holes, 
the before-mentioned 
squared Horton set, cannot serve as an example.
Due to this and several other observations we state the following conjecture. 

\begin{conj}\label{conj:super}
	For any constant $k\geq 4$, any $n$-point set in general position contains 
	$\omega(n^2)$ general $k$-holes.
\end{conj}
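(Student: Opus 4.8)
The natural plan is to settle the conjecture first for $k=4$ and then to bootstrap it to general $k$ via triangulations, using the construction behind Theorem~\ref{thm:ub_min_gen} as a \emph{lower}-bound counting scheme. For $k=4$ one can set up an exact identity. For an edge $e$ spanned by $S$, let $E_L(e)$ and $E_R(e)$ denote the sets of empty triangles incident to $e$ on its two sides. Every pair in $E_L(e)\times E_R(e)$ spans a \mbox{4-hole}: a convex one when the four points are in convex position (and then $e$ is a diagonal of the quadrilateral), and a ``dart'' otherwise (and then $e$ is the unique interior diagonal of the dart). Conversely, a convex \mbox{4-hole} is produced by both of its diagonals and a dart by its single interior diagonal, so
\[
  \sum_{e}\,|E_L(e)|\cdot|E_R(e)| \;=\; \#\{\text{darts of }S\}\;+\;2\cdot\#\{\text{convex }4\text{-holes of }S\},
\]
which lies between $\#\{\text{4-holes of }S\}$ and $2\cdot\#\{\text{4-holes of }S\}$. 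Hence the conjecture for $k=4$ is equivalent to the intrinsic statement $\sum_e |E_L(e)|\cdot|E_R(e)|=\omega(n^2)$.

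For general $k$, the plan is to grow a \mbox{$k$-hole} from a single empty triangle by attaching, in $k-3$ steps, one further empty triangle across a current boundary edge while keeping the union simple; this is exactly the construction underlying the proof of Theorem~\ref{thm:ub_min_gen}, now read as producing every \mbox{$k$-hole} at least once (through any of its triangulations). A super-quadratic lower bound then follows once the number of such growth sequences is shown to be $\omega(n^2)$, which is again a statement that many edges of $S$ carry many empty triangles on both sides. So $k=4$ is genuinely the heart of the matter, and it is here that the multiplicity of \mbox{$k$-holes} per $k$-tuple (the feature distinguishing \mbox{$k$-holes} from $k$-islands) has to be exploited.

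The main obstacle is that $\sum_e |E_L(e)|\cdot|E_R(e)|=\omega(n^2)$ cannot be reached by double counting alone. One always has $\sum_e\big(|E_L(e)|+|E_R(e)|\big)=3\cdot\#\{\text{empty triangles of }S\}=\Omega(n^2)$, spread over at most $\binom{n}{2}$ edges; but a ``thin sliver'' configuration, in which all empty triangles incident to an edge lie on one side of it, shows that this is consistent with $\sum_e |E_L(e)|\cdot|E_R(e)|=O(n^2)$. Geometric structure of $S$ must therefore enter the argument, and the natural route is a dichotomy: if $S$ is ``grid-like'' then the analysis of Section~\ref{sec:min_lb_grid} already yields $\omega(n^2)$ \mbox{$k$-holes}, so the task reduces to proving that a point set that is ``far from grid-like'' necessarily has many edges spanning many empty triangles on both sides. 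Making ``grid-like'' into a precise, quantitative dichotomy and disposing of the complementary case is where I expect the real difficulty to lie; this is also the point at which the problem meets B{\'a}r{\'a}ny's (still open) conjecture that every point set contains a segment spanning a super-constant number of \mbox{3-holes} (see~\cite[Chapter~8.4, Problem~4]{BMP05}), so any unconditional proof will likely have to combine an extremal or incidence-geometric input of this flavour with the combinatorial gluing scheme above.
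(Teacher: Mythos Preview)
There is nothing to compare against: the statement is a \emph{conjecture} in the paper, not a theorem, and the paper offers no proof. It is explicitly left open and is singled out in the conclusion as ``the maybe most intriguing open question''. Your proposal is likewise not a proof but a research outline, and you are candid about this in the last paragraph.

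Your reformulation for $k=4$ is correct: the identity $\sum_e |E_L(e)|\cdot|E_R(e)| = \#\{\text{darts}\} + 2\cdot\#\{\text{convex 4-holes}\}$ holds, so the number of 4-holes is $\Theta\big(\sum_e |E_L(e)|\cdot|E_R(e)|\big)$. But the connection you draw to B\'ar\'any's conjecture is exactly the obstruction the paper already records just before stating the conjecture: a super-quadratic bound on 4-holes would in particular force some edge to carry a super-constant number of empty triangles, so an unconditional proof here settles B\'ar\'any's problem. Your ``grid-like vs.\ far from grid-like'' dichotomy is a reasonable heuristic, but you give no definition that would make it a theorem, and the complementary case is precisely the open problem. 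One further gap: the bootstrapping from $k=4$ to general $k$ is not justified. Knowing that there are $\omega(n^2)$ 4-holes does not immediately yield $\omega(n^2)$ $k$-holes, because the ``growth sequence'' argument you sketch needs, at each step, an empty triangle on the \emph{outside} of the current polygon across some boundary edge, and you have not controlled how many distinct $k$-holes a single 4-hole can seed (nor, conversely, bounded how many 4-holes a single $k$-hole absorbs). In short, your write-up is a fair summary of why the conjecture is hard, but it does not advance beyond what the paper already says.
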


\subsection{A lower bound on the number of general k-holes in any perturbed grid}
\label{sec:min_lb_grid}
As in Section~\ref{sec:min_ub_horton}, we first deal with the regular integer grid and then derive results for perturbed grids.
We again consider an integer grid~$G$ of size $\sqrt{n} \times \sqrt{n}$. 
As a distance measure we use the $L_{\infty}$-norm, i.e., the distance of two 
points $p,q \in G$ is the maximum of the differences of their $x$- and $y$-coordinates.
The length of a segment $e$ spanned by two points of $G$ is defined by the distance of 
its endpoints.
Similar to prime segments, we denote a $k$-gon in $G$ where all edges are prime segments of $G$ 
as a \emph{prime $k$-gon}. 
Further, let $\square=\partial \CH(G)$ be the square forming the boundary of the convex hull of $G$.
We denote a line $l$ that is spanned by points of $G$ and intersects $\square$ on opposite sides, i.e., $l$ either intersects both vertical segments of $\square$ or $l$ intersects both horizontal segments of $\square$, as \emph{cutting line (of $G$)}.

\begin{obs}
\label{obs:grid_opposite}
For an integer grid $G$ of size $\sqrt{n} \times \sqrt{n}$, consider a prime segment 
$pq$ spanned by two points of $G$ and its supporting line $l$.
If $pq$ has length $d$, then $l$ contains $O(\sqrt{n}/d)$ points of $G$. 
If in addition, $l$ is a cutting line of $G$, then $l$ contains at least $\LF\sqrt{n}/d\RF=\Omega(\sqrt{n}/d)$ points of $G$.
\end{obs}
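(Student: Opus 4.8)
The plan is to fix coordinates so that $G=\{0,1,\dots,\sqrt n-1\}^2$. Since $pq$ is a prime segment, writing $q-p=(d_x,d_y)$ we have $\gcd(|d_x|,|d_y|)=1$ (with the convention $\gcd(a,0):=a$), and the $L_\infty$-length of $pq$ equals $d=\max\{|d_x|,|d_y|\}$. The one non-routine ingredient I would invoke is the standard lattice fact that, because $(d_x,d_y)$ is primitive, the set of \emph{all} lattice points on $l$ is exactly $\{p+t\,(d_x,d_y):t\in\mathbb Z\}$. So counting points of $G$ on $l$ reduces to counting integers $t$ for which $p+t(d_x,d_y)$ stays inside the grid.

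For the upper bound, assume w.l.o.g.\ $|d_x|=d$ (if instead $|d_y|=d$ argue symmetrically with $y$; if one coordinate of the direction vanishes then by primitivity the other is $\pm1=d$, an axis direction carrying exactly $\sqrt n$ points). Then the $x$-coordinates of the grid points on $l$ form an arithmetic progression with common difference $d$ contained in $\{0,\dots,\sqrt n-1\}$, so there are at most $\lfloor(\sqrt n-1)/d\rfloor+1$ of them; as a prime segment contained in $G$ has length $d\le\sqrt n-1$, this is at most $2(\sqrt n-1)/d=O(\sqrt n/d)$.

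For the lower bound, suppose in addition that $l$ is a cutting line, say it meets the two vertical sides of $\square$ (the horizontal case is symmetric). Then $l$ is not vertical, $d_x\neq0$, and over the whole range $x\in[0,\sqrt n-1]$ the line stays inside $\CH(G)$; comparing the $y$-values of $l$ at the two ends (both in $[0,\sqrt n-1]$, differing by at most $\sqrt n-1$, while the $x$-values differ by $\sqrt n-1$) gives $|d_y|\le|d_x|$, hence $d=|d_x|$. Consequently every lattice point $p+t(d_x,d_y)$ whose $x$-coordinate lies in $\{0,\dots,\sqrt n-1\}$ automatically has $y$-coordinate in $[0,\sqrt n-1]$ and is thus a genuine point of $G$. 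The number of such $t$ equals the number of integers in $\{0,\dots,\sqrt n-1\}$ in a fixed residue class modulo $d$; distributing these $\sqrt n$ integers among the $d$ residue classes, each class, in particular this one, contains at least $\lfloor\sqrt n/d\rfloor$ of them, so $l$ carries at least $\lfloor\sqrt n/d\rfloor=\Omega(\sqrt n/d)$ points of $G$. The only place where care is genuinely needed is this cutting-line step: one must verify both that meeting two opposite sides of $\square$ forces $d=|d_x|$ (so that the progression step is exactly $d$) and that no lattice point of $l$ with in-range $x$-coordinate can escape $G$ vertically; everything else is elementary counting.
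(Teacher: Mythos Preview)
Your argument is correct. The paper states this as an \emph{observation} and supplies no proof at all, so there is nothing to compare against; your write-up fills in exactly the elementary lattice counting that the authors evidently deemed routine enough to omit.
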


\begin{theorem}
\label{thm:grid_prime_4holes}
The number of prime 4-holes contained in an integer grid $G$ of size $\sqrt{n} \times \sqrt{n}$ is \linebreak $\Omega(n^2 \log n / \log\log n)$.
\end{theorem}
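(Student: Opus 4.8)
The plan is to count prime $4$-holes by summing, over all prime segments $pq$ of $G$, the number of prime $4$-holes having $pq$ as a diagonal, and to extract an $\Omega(\sqrt n/d)$ factor twice — once from each of the two triangles on either side of $pq$ — while ranging over segments of many different lengths $d$. First I would fix a prime segment $pq$ of length $d$ whose supporting line $l$ is a cutting line of $G$; by Observation~\ref{obs:grid_opposite}, $l$ then carries $\Omega(\sqrt n/d)$ points of $G$, and the two lines $l',l''$ parallel to $l$ through the nearest grid points on either side likewise carry $\Omega(\sqrt n/d)$ points each (after a similar cutting-line argument, or by taking $d$ small enough that $l',l''$ are themselves cutting lines). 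A point $r'$ on $l'$ and a point $r''$ on $l''$, chosen so that $pr',qr'$ (resp.\ $pr'',qr''$) are prime segments, together with $p,q$ span a prime $4$-hole $pr'qr''$: it is empty because no grid point lies strictly between $l'$ and $l''$ except on $l$ itself, and the quadrilateral is non-degenerate and simple. So each such $pq$ contributes $\Omega((\sqrt n/d)^2) = \Omega(n/d^2)$ prime $4$-holes.

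Next I would count the prime cutting segments of each length. For a fixed small length $d$, there are $\Theta(n/d)$ "slots" of grid points in a direction roughly parallel to one side of $\square$ at which a cutting prime segment of length $d$ can start, and $\Theta(\sqrt n)$ translates of it, giving $\Theta(n^{3/2}/d)$ cutting prime segments of length $d$ — or at least $\Omega(n/d)$ of them, which already suffices. Summing the contribution $\Omega(n/d^2)$ over these, and then over dyadic lengths $d = 1,2,4,\dots$ up to $\Theta(\sqrt n)$, I would get
\[
\sum_{d} \bigl(\text{\# prime cutting segments of length }d\bigr)\cdot \Omega\!\left(\frac{n}{d^2}\right)
\;=\; \sum_{d=1}^{\Theta(\sqrt n)} \Omega\!\left(\frac{n}{d}\cdot\frac{n}{d^2}\right),
\]
which is dominated by small $d$ and only gives $\Omega(n^2)$ — not enough. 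The right bookkeeping, which I would use instead, is to note that a prime $4$-hole $pr'qr''$ is produced from (at least) two of its prime segments used as the "central" diagonal, so the overcount is only a constant; hence the true count is the sum over \emph{all} ordered pairs (central prime segment $pq$, choice of $r'$, choice of $r''$). Organizing the sum by the slope of $pq$ rather than its length, there are $\Omega(n/\log\log n)$ "essentially different" slope classes worth using (slopes $(a,b)$ with $|a|,|b|\le\sqrt n/d$ and $\gcd(a,b)=1$, the classical $\tfrac6{\pi^2}$-density of visible lattice directions giving the $\log\log$ loss via bounding the harmonic-type sum), each class containing $\Omega(n/d)$ cutting prime segments of the corresponding length $d$, each with $\Omega(n/d^2)$ completions — and it is the interaction of "how many slopes of a given length" with "how many segments and completions per slope" that, summed carefully, yields $\Omega(n^2\log n/\log\log n)$.

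The main obstacle is precisely this last summation: making the slope/length trade-off rigorous so that the $\log n$ genuinely appears. The point is that short segments (small $d$) give the largest per-segment yield $n/d^2$ and the most segments $n/d$, but there are few admissible slopes of length $d$; long segments give many slopes but tiny yield. One must show the optimal regime is a \emph{range} of lengths each contributing comparably, producing a logarithmic number of equal-order terms — this is where the extra $\log n$ over the naive $\Omega(n^2)$ comes from — while the $(\log\log n)^{-1}$ factor is the unavoidable cost of the density of primitive lattice directions (equivalently, of $\sum_{d\le N}\phi(d)/d^2$-type estimates). I would also need to double-check, when $d$ is not tiny, that the parallel neighbor lines $l',l''$ really are cutting lines so that Observation~\ref{obs:grid_opposite} applies to them, handling the $O(1)$-near-the-boundary cases separately since they do not affect the asymptotics. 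Everything else — simplicity and emptiness of the constructed quadrilaterals, and the constant-factor overcount from choosing which prime segment is the diagonal — is routine.
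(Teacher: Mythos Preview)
Your overall strategy --- build prime $4$-holes $p\,r'\,q\,r''$ over a prime diagonal $pq$ using one point from each neighboring parallel line $l', l''$, then sum over diagonals --- is exactly the paper's approach, and your verification that such quadrilaterals are simple, empty, and prime is correct. The gap is in the summation.

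The specific miscount is the number of prime segments of a given length. For a fixed primitive direction $(a,b)$ with $d = \max(|a|,|b|) \le \sqrt{n}/3$, a prime segment in that direction is determined by its left endpoint, of which there are $(\sqrt n - |a|)(\sqrt n - |b|) = \Theta(n)$, not $\Theta(n/d)$ as you write. Since there are $8\varphi(d)$ primitive vectors of $L_\infty$-length exactly $d$, the total number of prime segments of length $d$ is $\Theta(n\,\varphi(d))$, and the sum you actually want is
\[
\sum_{d=1}^{\Theta(\sqrt n)} \Theta\bigl(n\,\varphi(d)\bigr) \cdot \Omega\!\left(\frac{n}{d^2}\right) \;=\; \Omega\!\left(n^2 \sum_{d} \frac{\varphi(d)}{d^2}\right).
\]
This is the ``$\sum \varphi(d)/d^2$-type estimate'' you allude to, but you never reach it because of the undercount above; with your $\Theta(n/d)$ segments per slope class the sum collapses to $O(n^2)$ again. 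The paper organizes the same count more cleanly by first fixing one endpoint $p$ in the central $\sqrt n/3 \times \sqrt n/3$ subgrid --- which also disposes of your cutting-line worry, since then $l'$ and $l''$ are automatically cutting lines of suitable $\sqrt n/3 \times \sqrt n/3$ subgrids and hence carry $\Omega(\sqrt n/d)$ points by Observation~\ref{obs:grid_opposite} --- then summing over the $8\varphi(d)$ directions for each $d < \sqrt n/3$, and finally multiplying by the $\Theta(n)$ choices of $p$.

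One more correction: the $(\log\log n)^{-1}$ loss is not the $6/\pi^2$ density of visible lattice directions --- that is an \emph{average} statement about $\varphi$ and would in fact give the full $\log n$. It comes from the \emph{pointwise} lower bound $\varphi(d) \ge d / (e^\gamma \log\log d + 3/\log\log d)$ for $d\ge 3$, which yields
\[
\sum_{d\le N} \frac{\varphi(d)}{d^2} \;\ge\; \Omega\!\left(\frac{1}{\log\log N}\sum_{d\le N} \frac{1}{d}\right) \;=\; \Omega\!\left(\frac{\log N}{\log\log N}\right)
\]
with $N = \Theta(\sqrt n)$.
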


\begin{proof}
Assume that $\sqrt{n}=3m$ for some integer $m>0$. 
Consider 
contiguous subgrids of $G$ with 
size 
$\sqrt{n}/3 \times \sqrt{n}/3$, and denote by $C$ the central such
subgrid; see Figure~\ref{fig:kh_reg_grid_lines}.
For a fixed point $p\in C$, consider a point $q \in G$ such that $pq$ is a prime segment 
with length $1 \leq d < \sqrt{n}/3$, and its supporting line~$l$.  
Let $l'$ and $l''$ be the two lines parallel to $l$ and spanned by 
points of~$G$ for which no point of~$G$ lies between $l$ and $l'$,
and between $l$ and $l''$, respectively.
Then for each $l^* \in \{l', l''\}$, there exist two disjoint subgrids of $G$ 
(both of size $\sqrt{n}/3 \times \sqrt{n}/3$), for which $l^*$ is a cutting line; see again Figure~\ref{fig:kh_reg_grid_lines} for subgrids where $l'$ is a cutting line.
Thus, due to Observation~\ref{obs:grid_opposite}, 
each of the lines $l'$ and $l''$ contains at least $2\cdot \LF\frac{\sqrt{n}}{3d}\RF = \Omega\left(\frac{\sqrt{n}}{d}\right)$ points of $G$. 
As any point of $G \cap \{l' \cup l''\}$ spans a prime triangle with $pq$ (a triangle where all edges are prime segments), $pq$ is the diagonal of $\Omega\left(\frac{n}{d^2}\right)$ prime $4$-holes.

\begin{figure}[htb]
  \centering
  \includegraphics[page=4, scale=1]{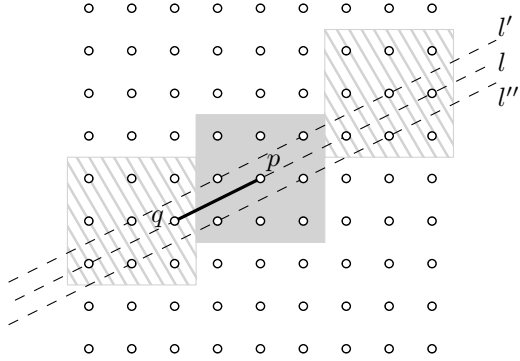} 
  \caption{A prime segment $pq$ with $p\in C$, and the according lines $l$, $l'$, and $l''$. 
	The central subgrid $C$ is drawn in solid gray. Two disjoint subgrids for which $l'$ 
	is a cutting line are drawn with stripes.}
  \label{fig:kh_reg_grid_lines}
\end{figure}

Now consider a fixed point $p\in C$ and a distance $1\leq d < \sqrt{n}/3$.
Then the number of points in $G$ that form prime segments with $p$ can be expressed in terms of Euler's phi-function\footnote{%
Euler's phi-function, also called Euler's totient function, 
$\varphi(d)$ denotes the number of positive integers less than~$d$ that are relatively prime to $d$. 
See for example~\cite{PhiZeta}~(page~52). 
This is the same as the number of segments with endpoints $(0,0)$ and $(a,d)$, $1 \leq a \leq d$, that do not contain any point with integer coordinates in their interior.}.
As $d < \sqrt{n}/3$, this number is exactly $8\cdot\varphi(d)$. 
Thus, summing up over all possible distances, we obtain a lower bound of
\[\sum_{d=1}^{\lfloor\sqrt{n}/3\rfloor-1} 8 \cdot\varphi(d) \cdot \Omega\left(\frac{n}{d^2}\right)
\quad = \quad \Omega\Bigg(n \cdot \sum_{d=1}^{\lfloor\sqrt{n}/3\rfloor-1} \frac{\varphi(d)}{d^2}\Bigg)\] 
for the number of prime 4-holes that have $p$ as a vertex. 
Summing up over all $n/9$ points of $C$, and considering that one 4-hole might 
have been counted four times, namely once for each of its vertices, we obtain 
a lower bound for the total number of prime 4-holes in $G$ of 
\[ \Omega\Bigg( n^2 \cdot \sum_{d=1}^{\lfloor\sqrt{n}/3\rfloor-1} \frac{\varphi(d)}{d^2}\Bigg)\!.\] 

Together with the lower bound of 
$\varphi(d) \geq d / (e^\gamma \log\log d + \frac{3}{\log\log d})$ for $d \geq 3$ where $\gamma$ is Euler's constant ~\cite{bs_ant_1996}~(Theorem 8.8.7)
(and $\varphi(d)=1$ for $d\leq 2$), 
this implies that the number of prime 4-holes in $G$ can be bounded~by
\begin{eqnarray*} 
	\Omega\Bigg( n^2 \cdot \sum_{d=1}^{\lfloor\sqrt{n}/3\rfloor-1} \frac{\varphi(d)}{d^2}\Bigg) 
	& = & \Omega\Bigg(n^2 \cdot \sum_{d=3}^{\lfloor\sqrt{n}/3\rfloor-1} \frac{1}{d\cdot (e^\gamma \log\log d + \frac{3}{\log\log d})}\Bigg)  \\
	& = & \Omega\Bigg(\frac{n^2}{e^\gamma \log\log n + 3} \cdot \sum_{d=3}^{\lfloor\sqrt{n}/3\rfloor-1} \frac{1}{d}\Bigg) \\
	& = & \Omega\Bigg(\frac{n^2 \log n}{\log\log n}\Bigg)\\[-6ex]
\end{eqnarray*}
\end{proof}

For every constant $k\geq 4$, a prime 4-hole can be extended to a prime 
$k$-hole by adding nearby points of the grid. As on the other hand, one prime $k$-hole
contains only a constant number of prime 4-holes, we have the following corollary.

\begin{cor}
\label{cor:grid_prime_kholes}
For any constant $k \geq 4$, the integer grid $G$ of size $\sqrt{n} \times \sqrt{n}$ 
contains $\Omega(n^2 \log n / \log\log n)$ prime $k$-holes.
\end{cor}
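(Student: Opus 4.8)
The plan is to reduce Corollary~\ref{cor:grid_prime_kholes} to Theorem~\ref{thm:grid_prime_4holes} by a direct ``padding'' argument: take each prime $4$-hole guaranteed by the theorem and extend it, in a controlled way, to a prime $k$-hole, then show the extension map is at most constant-to-one.

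First I would fix a constant $k \geq 4$ and recall from Theorem~\ref{thm:grid_prime_4holes} that $G$ contains $\Omega(n^2 \log n / \log\log n)$ prime $4$-holes. For the lower bound it suffices to exhibit, for each such prime $4$-hole $Q$, at least one prime $k$-hole $H(Q)$ whose vertex set contains that of $Q$ (or more weakly, that ``lives near'' $Q$), in such a way that each prime $k$-hole arises as $H(Q)$ for only $O(1)$ distinct $Q$'s. Concretely, given a prime $4$-hole $Q$, pick any edge $e = ab$ of $Q$; since $e$ is a prime segment it has unit-scale length, so in the grid there are at least a constant number of lattice points $p_1, p_2, \dots$ lying just ``outside'' $Q$ adjacent to $e$ and forming, together with $a$ and $b$, prime triangles that are empty (this is exactly the local structure already exploited in the proof of Theorem~\ref{thm:grid_prime_4holes}, where prime triangles on a prime segment are counted). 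Attaching $k-4$ such points one at a time along $e$ — each new point being a neighbor of the previous one so that all new edges are prime segments, and choosing them to keep the polygon simple and interior-empty — yields a prime $k$-hole $H(Q)$ containing $Q$. Because $k$ is constant, only a constant number of attachment steps are needed, and at each step a nonzero (in fact, at least one) valid choice exists as long as $n$ is sufficiently large; this produces the required prime $k$-hole.

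For the at-most-constant-to-one property, observe that any prime $k$-hole $H$ has exactly $\binom{k}{4} = O(1)$ four-element subsets of its vertex set, hence contains at most $\binom{k}{4}$ prime $4$-holes (every prime $4$-hole obtained as a sub-polygon of $H$ uses four of its vertices). Thus $H$ can equal $H(Q)$ for at most $O(1)$ choices of $Q$, and summing over all prime $4$-holes $Q$ gives
\[
\#\{\text{prime } k\text{-holes in } G\} \;\geq\; \frac{1}{O(1)} \cdot \Omega\!\left(\frac{n^2 \log n}{\log\log n}\right) \;=\; \Omega\!\left(\frac{n^2 \log n}{\log\log n}\right),
\]
which is the claimed bound.

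The main obstacle is making the extension step fully rigorous: one must verify that, when adding the $k-4$ extra lattice points adjacent to a chosen prime edge of $Q$, the resulting polygon can be kept \emph{simple}, \emph{interior-empty}, and with \emph{all edges prime}, simultaneously, for every prime $4$-hole $Q$ (not just a typical one). This requires a careful but routine local analysis near the edge $e$ — essentially observing that the two lattice lines flanking the supporting line of $e$ supply enough consecutive prime-segment neighbors, exactly as in Theorem~\ref{thm:grid_prime_4holes} — together with the mild hypothesis that $n$ is large enough in terms of $k$ for these neighbors to exist inside $G$. Everything else (counting sub-$4$-holes, the union bound) is immediate.
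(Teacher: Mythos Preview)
Your approach is essentially identical to the paper's: the paper's entire argument is the single sentence preceding the corollary, which asserts that every prime $4$-hole extends to a prime $k$-hole by adding nearby grid points and that each prime $k$-hole contains only a constant number of prime $4$-holes. Your write-up spells this out in more detail and correctly flags the one point that needs care---namely that the local extension can always be made simple, interior-empty, and prime---which the paper likewise leaves implicit.
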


In contrast to the bound in Theorem~\ref{thm:ub_min_gen}, the bound in Corollary~\ref{cor:grid_prime_kholes} is independent of $k$. The following alternative bound again depends on $k$.

\begin{theorem}
\label{thm:grid_prime_kholes_alternative}
For every $c < 1$ and every $3\leq k \leq c \cdot 2\sqrt{n}$, the integer grid $G$ of size $\sqrt{n} \times \sqrt{n}$ 
contains 
$\Omega(n^{\LF k/2 \RF/2+1})$ prime $k$-holes.
\end{theorem}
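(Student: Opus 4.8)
The plan is to work in the integer grid $G$ itself and construct, for the given $k$, an explicit family of thin prime $k$-holes large enough to give the bound. Write $m=\lfloor k/2\rfloor$; the target is $\Omega(n^{m/2+1})$ prime $k$-holes. The construction rests on one elementary fact: a lattice triangle of area $1/2$ all of whose three sides are prime segments is interior-empty, since Pick's formula gives $I=1/2-3/2+1=0$; and if $pq$ is a prime segment with supporting line $\ell$, then for \emph{every} lattice point $w$ of either of the two grid lines immediately adjacent to $\ell$ (the lines $l',l''$ of Lemma~\ref{lem:kh_ub_grid}) the triangle $pqw$ is of exactly this kind. Call $pqw$ a \emph{flap} on the edge $pq$. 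Attaching a flap to an edge $e$ of an interior-empty prime polygon $P$, on the side of $e$ facing the corresponding adjacent grid line, again produces an interior-empty prime polygon: the new interior is the disjoint union of $\operatorname{int}(P)$, the relative interior of $e$ (lattice-point free since $e$ is prime), and the flap's interior (lattice-point free since it lies in the open slab between two adjacent grid lines), and the two new edges $pw,qw$ are prime because each crosses that slab.

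Concretely I would fix a thin \emph{core} $P_0$: an interior-empty simple prime polygon with $m$ vertices, all edges unit segments, of the form ``a zig-zag $(0,0),(1,1),(2,0),(3,1),\dots$ along the top together with a mirrored zig-zag along the bottom'', so that $P_0$ lies in a strip of bounded height, spans $O(m)$ in the other direction, and --- crucially --- whenever two of its edges happen to be collinear, the outward normals at those edges point to opposite sides of the common line. I would place $P_0$ inside the central $\frac{\sqrt n}{2}\times\frac{\sqrt n}{2}$ subgrid (possible because $k\le 2c\sqrt n$ forces $m=O(\sqrt n)$), which gives $\Theta(n)$ translates and, at each of the $m$ edges, an outward-adjacent grid line carrying $\Omega(\sqrt n)$ lattice points of $G$. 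To $P_0$ I then attach one outward flap on each edge, the $i$-th apex chosen among the $\Omega(\sqrt n)$ \emph{admissible} points of the corresponding adjacent line, where ``admissible'' excludes the $O(m)=O(\sqrt n)$ points that would make the $i$-th flap overlap another flap --- so $\Omega(\sqrt n)$ choices remain. For even $k$ this yields a prime $2m=k$-gon with cyclic vertex sequence $v_1,w_1,v_2,w_2,\dots,v_m,w_m$; for odd $k$ I replace one flap by a thin area-$1$ quadrilateral (two apices, one free with $\Omega(\sqrt n)$ choices and one forced), reaching $k=2m+1$ vertices with the same per-feature freedom.

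The point that needs real care --- and the one I expect to be the main obstacle --- is \emph{simplicity} of the assembled polygon: one must show the $m$ flap triangles are pairwise disjoint and each disjoint from $\operatorname{int}(P_0)$ without spending the $\Omega(\sqrt n)$ freedom. Each flap lies in the narrow slab of its edge's supporting line. Flaps on parallel but non-collinear core edges lie in disjoint slabs because the zig-zag keeps parallel edges at least two grid lines apart; flaps on collinear core edges lie on opposite sides of the common line by the choice of $P_0$; flaps on two edges meeting at a core vertex are kept apart by taking each apex far along its adjacent line on the side that makes the flap's angle at that vertex small (so the flap ``hugs'' its edge), after which one checks that at every core vertex the core angle together with the two hugging flaps spans less than $2\pi$; and two far-reaching slivers on non-parallel edges can cross only near the single, bounded intersection of their slab-lines, which is precisely the $O(m)$-sized set of apices excluded by admissibility.

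For the count, a prime $k$-hole produced this way is obtained in only $O(1)$ ways (the core is recovered by deleting the flap apices), so the number of prime $k$-holes is $\Omega\!\left(n\cdot(\sqrt n)^{m}\right)=\Omega(n^{m/2+1})=\Omega(n^{\lfloor k/2\rfloor/2+1})$, as claimed; primality, interior-emptiness, the odd-$k$ adjustment and the bookkeeping are all routine once the flap observation is in place. For $k=4$ this is exactly the family counted in Theorem~\ref{thm:grid_prime_4holes} (a unit segment with a flap above and a flap below), and for larger $k$ it is the natural generalisation; as with Corollary~\ref{cor:grid_prime_kholes}, the bound obtained is not claimed to be tight. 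Finally, the hypothesis $c<1$ (rather than merely $k\le 2\sqrt n$) is what upgrades ``at least one choice'' to ``$\Omega(\sqrt n)$ choices'' at every step: it leaves a $(1-c)$-fraction of each relevant line, and of the grid, free for the core and the flap apices.
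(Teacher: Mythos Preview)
Your plan has the right arithmetic --- a rigid object with $\Theta(n)$ placements together with $m=\lfloor k/2\rfloor$ free choices, each of size $\Omega(\sqrt n)$, would indeed give $\Omega(n^{m/2+1})$ --- but the scaffolding you build to realise these choices is much heavier than what the paper uses, and two of its load-bearing pieces are not actually in place.

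First, the core $P_0$ as you describe it does not exist. A ``zig-zag $(0,0),(1,1),(2,0),(3,1),\dots$ along the top together with a mirrored zig-zag along the bottom'', closed into a simple polygon with unit edges, necessarily encloses lattice points: already the smallest instance $(0,1),(1,2),(2,1),(1,0)$ has the point $(1,1)$ in its interior, and the hexagon $(0,1),(0,2),(1,3),(2,2),(2,1),(1,0)$ contains $(1,1)$ and $(1,2)$. So you would have to produce, for every $m\ge 3$, a concrete interior-empty prime $m$-gon with the additional slab-separation property you need, and nothing in the proposal does this. Second, even granting a suitable core, your admissibility step discards $O(m)$ apex positions from a pool of $\Omega(\sqrt n)$. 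Since $m=\lfloor k/2\rfloor$ may be as large as $c\sqrt n$ for any $c<1$, and you never control the constants hidden in either the $O$ or the $\Omega$, the subtraction need not leave $\Omega(\sqrt n)$ points; the appeal to ``$c<1$'' at the end does not help unless you can show the exclusion constant is at most~$1$.

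The paper bypasses both issues by building the $k$-holes directly rather than via a core-plus-flaps decomposition. Pick $\lfloor k/2\rfloor+1$ consecutive rows of $G$; in the bottom row choose one point, in each middle row choose a pair of horizontally adjacent points, and in the top row choose one or two points (depending on the parity of $k$). Visiting the left points upward and the right points downward gives a $y$-monotone polygon whose every edge joins consecutive rows and is therefore prime, and whose interior in the open strip between rows $i$ and $i{+}1$ is a parallelogram of horizontal width~$1$, hence lattice-point free. The count is immediate: each of the $\lfloor k/2\rfloor+1$ rows contributes at least $\sqrt n-1$ independent horizontal choices, and there are $\sqrt n-\lfloor k/2\rfloor\ge(1-c)\sqrt n$ choices for the vertical placement. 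No core, no flaps, no disjointness analysis --- and the only place the hypothesis $c<1$ enters is that last, explicit, subtraction.
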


\begin{proof}
	Consider a $k$-hole $H$ for which the following properties hold:
	(i) $H$ is spanned by points in consecutive rows of $G$,
	(ii) the lowest of these rows contains exactly one point of $H$, 
	(iii) the highest of these rows contains one or two points of $H$, and 
	(iv) all rows in-between contain exactly two points of $H$ that are consecutive in the row (but in general not consecutive along the boundary of $H$).
	Figure~\ref{fig:grid_prime_kholes_alternative} shows examples of such $k$-holes for $k \in \{6,7,8\}$. 

\begin{figure}[htb]
  \centering
  	\includegraphics[page=8, scale=1]{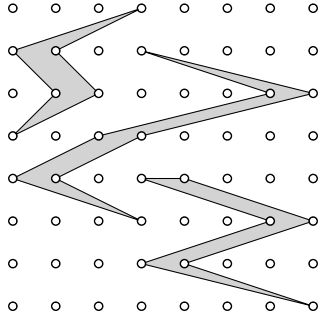} 
  \caption{Examples of prime $k$-holes for the proof of Theorem~\ref{thm:grid_prime_kholes_alternative}. }
  \label{fig:grid_prime_kholes_alternative}
\end{figure}

Clearly, all such $k$-holes are prime. Further, any such $k$-hole contains points from $\LF\frac{k-2}{2}\RF + 2 = \LF\frac{k}{2}\RF + 1$ rows of $G$. 
This gives $\sqrt{n}-\LF\frac{k}{2}\RF$ possible rows for the lowest point of the $k$-hole. 
Further, in every row there are $\sqrt{n}$ or $\sqrt{n}-1$ possibilities to choose the one or two points for the $k$-hole, respectively. This gives a total of at least 
\[ \left(\sqrt{n}-\LF\frac{k}{2}\RF\right)\cdot (\sqrt{n}-1)^{\left(\LF\frac{k}{2}+1\RF\right)} \]
such $k$-holes in $G$. 
For every $c < 1$ and every $3\leq k \leq c \cdot 2\sqrt{n}$, the first factor is $\Omega(\sqrt{n})$.
This implies  a lower bound of $\Omega(\sqrt{n}^{\LF k/2\RF+2})=\Omega(n^{\LF k/2 \RF/2+1})$
on the number of these $k$-holes and thus on the total number of $k$-holes in $G$.
\end{proof}

When comparing the two lower bounds provided in Corollary~\ref{cor:grid_prime_kholes} and Theorem~\ref{thm:grid_prime_kholes_alternative}, we observe that the bound from Theorem~\ref{thm:grid_prime_kholes_alternative} beats the one from Corollary~\ref{cor:grid_prime_kholes} for $k\geq 6$, while Corollary~\ref{cor:grid_prime_kholes} gives a better bound for $k \in \{4,5\}$.

As a last result of this section, we translate these bounds from the integer grid $G$ to grid-like sets of points in general position (for example squared Horton sets). To this end, 
let an $\varepsilon$-perturbation $p_{\varepsilon}(G)$ of~$G$ be a perturbation where every point of~$G$ is replaced by a point at distance at most~$\varepsilon$.
Observe that there exists an $\varepsilon>0$ such that for any $\varepsilon$-perturbation $p_{\varepsilon}(G)$ of~$G$, every prime $k$-hole in~$G$ is also a $k$-hole in~$p_{\varepsilon}(G)$.
This is not true (not even for arbitrarily small $\varepsilon$) 
for a non-prime $k$-hole in~$G$, i.e., a $k$-hole having on its boundary 
points of~$G$ that are not vertices of that $k$-hole.
As in Corollary~\ref{cor:grid_prime_kholes} and Theorem~\ref{thm:grid_prime_kholes_alternative} we count only prime $k$-holes, we obtain the following statement.

\begin{cor}
\label{cor:grid_kholes_lb}
There exists an $\varepsilon>0$ such that any $\varepsilon$-perturbation $p_{\varepsilon}(G)$ of an integer grid $G$ of size $\sqrt{n} \times \sqrt{n}$ contains \\[-3.5ex]
\begin{enumerate} 
	\item[] $\Omega(n^2 \log n / \log\log n)$ 4-holes,\\[-3.5ex]
    \item[] $\Omega(n^2 \log n / \log\log n)$ 5-holes, and\\[-3.5ex]
	\item[] $\Omega(n^{\LF k/2 \RF/2+1})$ $k$-holes for any $6\leq k \leq c \cdot 2\sqrt{n}$ with $c < 1$. 
\end{enumerate} 
\end{cor}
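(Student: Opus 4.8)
The plan is to combine the two lower bounds on \emph{prime} $k$-holes in the integer grid that were just established -- Corollary~\ref{cor:grid_prime_kholes} and Theorem~\ref{thm:grid_prime_kholes_alternative} -- with the observation, stated immediately before the corollary, that a sufficiently small perturbation turns every prime $k$-hole of $G$ into a genuine $k$-hole of the perturbed set. So essentially all the real work has already been done, and what remains is to pin down one perturbation tolerance that serves all the relevant values of $k$ at once, and then to read off the three claimed bounds.

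First I would fix $n$ with $\sqrt{n}\in\mathbb{N}$ together with the grid $G$ of size $\sqrt{n}\times\sqrt{n}$, and then choose the tolerance. For each $k$ in the relevant range (that is, $4\le k\le c\cdot 2\sqrt{n}$), there is a threshold $\varepsilon_k(G)>0$ below which every prime $k$-hole of $G$ remains a simple, empty $k$-gon under any $\varepsilon$-perturbation: simplicity of a polygon and the property ``no (perturbed) grid point lies in the closed polygon except at a vertex'' are both open conditions, a prime $k$-hole by definition has no grid point in its interior or on the relative interior of an edge, and for a fixed $n$ there are only finitely many prime $k$-holes and finitely many grid points. Taking $\varepsilon:=\min_k\varepsilon_k(G)>0$ gives a single tolerance working for all $k$ in range simultaneously. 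For $\varepsilon<\tfrac12$ two distinct prime $k$-holes of $G$ -- whether they differ in their vertex set or only in the cyclic order in which a fixed vertex set is joined -- remain distinct $k$-holes of $p_\varepsilon(G)$, so the count is not inflated by collisions.

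The three statements then follow at once. For $k=4$ and $k=5$, Corollary~\ref{cor:grid_prime_kholes} gives $\Omega(n^2\log n/\log\log n)$ prime $k$-holes of $G$, each of which is a $k$-hole of $p_\varepsilon(G)$. For $6\le k\le c\cdot 2\sqrt{n}$ with $c<1$, Theorem~\ref{thm:grid_prime_kholes_alternative} gives $\Omega(n^{\lfloor k/2\rfloor/2+1})$ prime $k$-holes of $G$, again all surviving the perturbation, which is exactly the stated bound.

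The one point genuinely requiring care -- and the reason non-prime $k$-holes of $G$ cannot be used in this argument -- is the perturbation step itself: a $k$-hole of $G$ carrying a non-vertex grid point on its boundary can lose emptiness (or simplicity) under arbitrarily small perturbations, since that boundary point may be pushed strictly inside. Restricting to prime $k$-holes, which by definition avoid collinear boundary points, removes this obstacle completely, so beyond the bookkeeping above nothing further is needed.
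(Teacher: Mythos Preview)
Your argument is correct and follows precisely the route the paper takes: the corollary is stated without a separate proof, relying on the observation (the paragraph immediately preceding it) that every prime $k$-hole of $G$ survives a sufficiently small perturbation, together with Corollary~\ref{cor:grid_prime_kholes} for $k\in\{4,5\}$ and Theorem~\ref{thm:grid_prime_kholes_alternative} for $k\ge 6$. Your write-up is simply a more careful unpacking of that observation (fixing $n$ first, taking the minimum tolerance over the finitely many relevant $k$, and noting that distinct prime $k$-holes stay distinct), which is fine.
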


\section{Conclusion}
\label{sec:conclusion}
We have shown various lower and upper bounds on the numbers of 
convex, non-convex, and general \mbox{$k$-holes} and 
\mbox{$k$-gons} in point sets. 
Several questions remain unsettled,
where the maybe most intriguing open question 
is to prove Conjecture~\ref{conj:super}, i.e., to show 
a super-quadratic lower bound for the number of general 
\mbox{$k$-holes} for $k\geq4$.

\paragraph{\bf Acknowledgments.} 
Research on this topic was initiated during the \emph{Third Workshop on Discrete Geometry and its Applications} in Morelia (Michoac{\'a}n, Mexico).
We thank Edgar Ch{\'a}vez and Feliu Sagols for helpful discussions.
An extended abstract of part of this article has been presented in~\cite{KHOLES_CCCG}.

{\small
\bibliographystyle{abbrv}
\bibliography{holesbib}
}
\end{document}